\tikzset{snake it/.style={decorate, decoration=snake}}
\theoremstyle{plain}
\newtheorem{theorem}{Theorem}
\definecolor{selectiveyellow}{rgb}{0.8, 0.0, 0.8}
\begin{document}

\title{Unitary-invariant witnesses of quantum imaginarity}

\author{Carlos Fernandes}
\email{carlos.fernandes@inl.int}
\affiliation{International Iberian Nanotechnology Laboratory (INL), Av. Mestre Jos\'{e} Veiga, 4715-330 Braga, Portugal}
\affiliation{Centro de F\'{i}sica, Universidade do Minho, Braga 4710-057, Portugal}

\author{Rafael Wagner}
\email{rafael.wagner@inl.int}
\affiliation{International Iberian Nanotechnology Laboratory (INL), Av. Mestre Jos\'{e} Veiga, 4715-330 Braga, Portugal}
\affiliation{Centro de F\'{i}sica, Universidade do Minho, Braga 4710-057, Portugal}

\author{Leonardo Novo}
\email{leonardo.novo@inl.int}
\affiliation{International Iberian Nanotechnology Laboratory (INL), Av. Mestre Jos\'{e} Veiga, 4715-330 Braga, Portugal}

\author{Ernesto F. Galv\~ao}
\email{ernesto.galvao@inl.int}
\affiliation{International Iberian Nanotechnology Laboratory (INL), Av. Mestre Jos\'{e} Veiga, 4715-330 Braga, Portugal}
\affiliation{Instituto de F\'{i}sica, Universidade Federal Fluminense, Niter\'{o}i -- RJ, Brazil}

\date{\today}

\begin{abstract}
   Quantum theory is traditionally formulated using complex numbers. This imaginarity of quantum theory has been quantified as a resource with applications in discrimination tasks, pseudorandomness generation, and quantum metrology. Here we propose witnesses for imaginarity that are basis-independent, relying on measurements of unitary-invariant properties of sets of states. For 3 pure states, we completely characterize the invariant values attainable by quantum theory, and give a partial characterization for 4 pure states. We show that simple pairwise overlap measurements suffice to witness imaginarity of sets of 4 states, but not for sets of 3. Our witnesses are experimentally friendly, opening up a new path for measuring and using imaginarity as a resource.
\end{abstract}

\maketitle

\textit{Introduction.}-- Hickey and Gour~\cite{hickey2018quantifying} introduced a quantum resource theory~\cite{chitambar2019quantum} based on the fact that, for a given fixed basis of reference $\mathbb{A} := \{\vert a_i\rangle\}_i\subseteq \mathcal{H}$ some states have coherences~\cite{baumgratz2014quantifying} necessarily described by complex numbers $\langle a_i |\rho |a_j\rangle \notin \mathbb{R}$. They termed this the \emph{imaginarity} of a quantum state. In spite of this rigorous resource-theoretic treatment, its usefulness was initially unclear. Some early results in the literature suggested that imaginarity had no relevant role, as, for instance, various real-only formulations of quantum mechanics were known~\cite{stueckelberg1960quantum,stueckelberg1961quantum,antoniya2013real,hardy2011limited,wootters2010entanglement,wootters2015optimal}. Moreover, quantum theory restricted to real amplitudes was shown to be universal for quantum computation~\cite{rudolph2002rebit,aharonov2003simple}, and  capable of reproducing the statistics of any Bell experiment~\cite{McKague2009simulating}. Any generic complex-valued quantum computation can be evaluated using only real amplitudes by using an extra auxiliary qubit, enlarging the system from $\mathcal{H} \mapsto \mathbb{C}^2 \otimes \mathcal{H}$. In such a way, free states $\rho^{\mathbb{R}}$ and observables $H^{\mathbb{R}}$ defined in the larger system reproduce all predictions of the resourceful states $\rho$ and observables $H$ via $\text{Tr}(\rho H) = \text{Tr}(\rho^{\mathbb{R}} H^{\mathbb{R}})$. 

Surprisingly, Renou \textit{et al.}~\cite{renou2021quantum} found an experimental scenario that cannot be exactly modelled using quantum theory with real-valued amplitudes only. These findings have been experimentally verified~\cite{li2022testing,chen2022ruling,wu2022experimental}, and a few other similar scenarios have since been proposed~\cite{bednorz2022optimal,yao2024proposals}. The requirements to be met for such experiments are demanding, since they are based on network nonlocality scenarios. For example, every source needs to be independent, and measurements 
 from different parties must be space-like separated. In simple terms, requiring this independence -- and therefore a certain tensor product structure to be preserved -- guarantees that the trick of extending $\mathcal{H} \mapsto \mathbb{C}^2 \otimes \mathcal{H}$ fails. 

The foundational results of Renou \emph{et. al.} led to a burst of investigations on the imaginarity of quantum theory. It has been shown that imaginarity can be made operationally meaningful in a precise sense~\cite{wu2021operational,wu2021resource,wu2023resource}. Imaginarity of quantum theory was shown to have crucial effects in certain discrimination tasks~\cite{wu2021operational,herzog2002minimum}, hiding and masking~\cite{zhu2021hidingmasking}, machine learning~\cite{sajjan2023imaginary}, pseudorandomness~\cite{haug2023pseudorandom}, multiparameter metrology~\cite{carollo2018uhlmann,carollo2019quantumness,miyazaki2022imaginarityfree}, outcome statistics of linear-optical experiments~\cite{jones2022distinguishability,menssen2017distinguishability,shchesnovich2018collective}, Kirkwood-Dirac quasiprobability distributions~\cite{wagner2023quantum,budiyono2023operational,budiyono2023quantifying,budiyono2023quantum}, and weak-value theory~\cite{wagner2023anomalous,kedem2012usingtechnical,dixon2009ultrasensitive,hosten2008observation,brunner2010measuringsmall,hofmann2011uncertainty,kunjwal2019anomalous}.

Network nonlocality witnesses of the type used to demonstrate the need for imaginarity are found via relaxations of semidefinite programing problems~\cite{tavakoli2023semidefinite}. They are basis-independent, dimension-independent, device-independent, and independent of extra assumptions such as the possibility of local tomography. The requirement of only weak assumptions certainly makes the foundational argument more compelling, but it has a price. The possible witnesses and experimental scenarios  are few and challenging, requiring precise clock control of independent sources, space-like separation, and measurements performed by many parties. 

An alternative way of witnessing imaginarity was pointed out in Refs.~\cite{oszmaniec2021measuring, wagner2023quantum}, and involves measurements of unitary-invariant quantities known in the literature as Bargmann invariants, or multivariate traces~\cite{bargmann1964note,oszmaniec2021measuring,quek2023multivariate}. Given a tuple of states $ \varrho = (\rho_1,\dots,\rho_n) \in \mathcal{D}(\mathcal{H})^n$, its Bargmann invariant is the (in general) complex-valued quantity $\text{Tr}(\rho_1\rho_2 \dots \rho_n)$, measurable by circuit families proposed in Refs.~\cite{oszmaniec2021measuring, quek2023multivariate}. {A function $f:\mathcal{D}(\mathcal{H})^n \to \mathbb{C}$ of the tuple of states  $\varrho$ is said to be unitary-invariant if  $f(U \varrho U^{\dagger}) = f(\varrho)$ and for any unitary $U$~\footnote{Here, $U \varrho U^\dagger := (U \rho_1 U^\dagger, \dots, U \rho_n U^\dagger)$, for any $\varrho = (\rho_1,\dots,\rho_n)$.}. Bargmann invariants clearly have that property.} When the estimated invariants have a non-zero imaginary part, at least one state in $\varrho$ must have quantum imaginarity. 

In this work we develop the theory characterizing the role of these invariants as witnesses of quantum imaginarity. They are experimentally promising, and their usage shares many of the benefits of using the Bell nonlocal network approach of Renou \emph{et. al.}~\cite{renou2021quantum} -- independence of basis and dimension, with no need for further assumptions such as local tomography. Our tests are \emph{not} device-independent, and assume the statistics to correspond to values of unitary-invariants, measurable in relatively straightforward experiments which we briefly describe.

We start with the foundational question of what complex values 3-state invariants can assume, and solve this problem completely by characterizing the region in the complex plane where all possible third-order invariants must lie. For this scenario with 3 pure states, we prove that pairwise overlaps cannot witness imaginarity. This is because any pairwise overlap triple, arising out of arbitrary triples of quantum states, can be alternatively obtained from states whose amplitudes are real-valued.

Besides the complete characterization of which values 3-state invariants can take, we obtain a partial characterization of the 4-state scenario. We show that, unlike the 3-state case, the imaginarity of 4 states \emph{can} be witnessed with pairwise overlap measurements only. This allows for simplified experimental tests to witness imaginarity of four states, as pairwise overlaps are simpler to measure than  higher-order Bargmann invariants. We also briefly discuss implications of our results for the quantitative description of weak values, Kirkwood-Dirac quasi-probability distributions, multiphoton indistinguishability, and geometric phases. 

\textit{Background: Imaginarity of quantum states.}--If we fix a quantum system $\mathcal{H}$ and a reference basis $\mathbb{A} = \{\vert i \rangle \}_i$, we call states $\rho^{\mathbb{R}}$ \emph{real-representable}, or simply \emph{free states}, if $\forall i,j,\langle i |\rho^{\mathbb{R}}|j\rangle \in \mathbb{R}$. We write this set as $\mathcal{R}(\mathcal{H},\mathbb{A})$. Each choice $(\mathcal{H},\mathbb{A})$ defines a section of quantum theory that can be treated as free within the resource theory of imaginarity~\cite{hickey2018quantifying,wu2021operational}. It is important to note that, in this treatment of free resources, fixing both $\mathcal{H}$ and $\mathbb{A}$ is crucial to make the theory non-trivial, as every single state $\rho \in \mathcal{R}(\mathcal{H}, \mathbb{A}_\rho)$ is real with respect to its spectral basis $\mathbb{A}_\rho$. Moreover, as mentioned in the introduction, suppose that we fix, for each dimension $d$, a certain canonical basis $\mathbb{A}$ that characterizes the free resources. If $\rho \in \mathcal{D}(\mathcal{H})$ is not free, then the simple operation
\begin{equation*}
    \rho \mapsto \rho^{\mathbb{R}} = \frac{1}{2}\left(\begin{matrix}
        \text{Re}[\rho] & - \text{Im}[\rho]\\
        \text{Im}[\rho] & \text{Re}[\rho]
    \end{matrix}\right) \in \mathcal{R}(\mathbb{C}^2 \otimes \mathcal{H},\mathbb{A})
\end{equation*}
sends any non-free state into a real-only one, in a higher dimensional Hilbert space. This encoding into real-amplitude states enables universal quantum computation~\cite{rudolph2002rebit,aharonov2003simple}, and arbitrary Bell inequality violations~\cite{McKague2009simulating}. 

We now relax the assumption of fixing a basis. This was first done in Ref.~\cite{miyazaki2022imaginarityfree} with the techniques developed in Ref.~\cite{designolle2021set} for coherence theory. If instead of considering a single state $\rho$ one considers a  \emph{set} of states $\{\rho_i\}_{i=1}^n \subset \mathcal{D}(\mathcal{H})$, one does not need to fix a basis of reference. Our free objects are now sets of states all of which have real amplitudes in some basis. Hence a set $\{\rho_i\}_i$ is free if $\exists U_{\mathbb{A}}:\mathcal{H} \to \mathcal{H}$ unitary such that $\forall i, U_{\mathbb{A}}\rho_i U_{\mathbb{A}}^\dagger = \rho_i^{\mathbb{R}}$, where $\rho_i^{\mathbb{R}}$ is real-represented with respect to some basis $\mathbb{A}$. In this manner, whether a set of states can be real-represented is a manifestly unitary-invariant property of the set. \emph{Set imaginarity}~\cite{miyazaki2022imaginarityfree} is the term used to describe a state set which cannot be taken unitarily to a set having only real amplitudes.

\textit{Quantum realizability of unitary-invariants.}-- We start stating our results by noting the simple fact that $\text{Im}[\Delta] \neq 0$ implies set imaginarity of \emph{every} tuple $\varrho \in \mathcal{D}(\mathcal{H})^n$ where  $\Delta = \Delta_n(\varrho) = \text{Tr}(\rho_1 \rho_2 \dots \rho_{n-1}\rho_n)$. Deciding if, for certain values $\Delta \in \mathbb{C}$, there exist tuples of quantum states $\varrho$ such that $\Delta_n(\varrho) = \Delta$  constitutes a quantum realizability problem~\cite{fraser2023realization}. 

To fix the notation, we will consider tuples of pure states $\Psi$. It is well-known that two tuples $\Psi, \Phi \in \mathcal{P}(\mathcal{H})^n$, with $\mathcal{P}(\mathcal{H}) = \{\vert \psi\rangle \langle \psi \vert : \vert \psi \rangle \in \mathcal{H}\}$, are unitarily equivalent if, and only if, their associated Gram matrices $G_\Psi = G_\Phi$ are equal \cite{chien2016projective,halperin1962onthe}. The Gram matrix $G_V$ of a tuple of vectors $V$ is given by $(G_V)_{ij} = \langle v_i|v_j\rangle$. However, inner-products are not physically meaningful, as they are not invariant under arbitrary choices of {global phases $e^{i\theta_k}\vert v_k\rangle$ for each vector in $V$, which is known as a \emph{choice of gauge} for the tuple}.

There are, nevertheless,  representations of $G_{\Psi}$ where every matrix element is written only in terms of Bargmann invariants{, which are invariant by choices of gauge}. Refs.~\cite{chien2016projective,oszmaniec2021measuring} introduced a procedure to {find such representations. It} is possible to write $G_\Psi=G(\pmb{\Delta}_\Psi)$, { in terms of a minimal \emph{tuple} of parameters $\pmb{\Delta}_\Psi$, which are functions of Bargmann invariants and thus independent of the gauge choice}. The general theory for obtaining the matrices  $G(\pmb{\Delta}_\Psi)$ can be found in Ref.~\cite{oszmaniec2021measuring}. For our purposes it will suffice to consider only two constructions, one resulting in a matrix of the form of  Eq.~\eqref{eq: Bargmann matrix} for the case of three states, and  another of the form discussed in the Appendix for the case of four states. 

The quantum realizability problem we will consider can then be stated as follows: Given a Hermitian matrix $H$ that we term \emph{candidate}, decide if there exists some tuple of states $\Psi$ for which $ H=G_{\Psi}$.  If so, we say that $ H$ is physically realizable. Because Gram matrices defined from Bargmann invariants are physically meaningful and independent of gauge choice, we will restrict ourselves -- without loss of generality -- to \emph{parameterized candidates} $H(\pmb\Delta)$ that when quantum realizable satisfy $H(\pmb\Delta) = G(\pmb\Delta_\Psi)$. We now discuss this aspect in more detail.

For $\Psi \in \mathcal{P}(\mathcal{H})^3$, for the generic situation where all two-state overlaps are non-zero, a candidate can be chosen to have the form:
\begin{equation}\label{eq: Bargmann matrix}
    H(\pmb{\Delta},\phi) = \left(\begin{matrix}
        1 & \sqrt{\Delta_{12}} & \sqrt{\Delta_{13}}\\
        \sqrt{\Delta_{12}} & 1 & \sqrt{\Delta_{23}}e^{i\phi}\\
        \sqrt{\Delta_{13}} & \sqrt{\Delta_{23}}e^{-i\phi} & 1
    \end{matrix}\right) 
\end{equation}
where $\pmb{\Delta} = (\Delta_{12},\Delta_{13},\Delta_{23}) \in [0,1]^3$ and $\phi \in [0,2\pi)$ are  unconstrained variables. When the candidate above is quantum realizable, there exists some $\Psi$ such that each $\Delta_{ij} = |\langle \psi_i|\psi_j\rangle|^2$ are Bargmann invariants of order two and the phase $\phi$ is the complex phase of the Bargmann invariant of order three $\Delta_{123} = \langle \psi_1|\psi_2\rangle \langle \psi_2|\psi_3\rangle \langle \psi_3\vert \psi_1\rangle$. Note that, in our notation, $G(\pmb{\Delta}_\Psi,\phi_\Psi)$ has the exact same form as Eq.~\eqref{eq: Bargmann matrix}, but having all parameters given by Bargmann invariants. We make a distinction in notation between parameterized candidates $H(\pmb{\Delta})$ for Gram matrices, and $G(\pmb{\Delta}_\Psi)$ that \emph{are} true Gram matrices. 

{We can now ask the quantum realizability question: for what values $\pmb{\Delta}$ and $\phi$ does such a parametrized matrix $H(\pmb{\Delta},\phi)$ give us $G_{\Psi}$ for some $\Psi$?} As such questions are in general, this is subtle. Various values that at first seem acceptable, cannot lead to quantum-realizable matrices. For instance, for $\pmb{\Delta}=(1,0,1)$ and any choice for $\phi$, the candidate $H(\pmb{\Delta},\phi)$ cannot have a quantum realization, since transitivity of equality prevents such overlaps to arise from quantum states~\cite{wagner2022inequalities}. There is, nevertheless, a simple way of characterizing matrices of the form (\ref{eq: Bargmann matrix}) which do correspond to quantum-realizable Gram matrices.

\begin{theorem}[Adapted from Ref.~\cite{chefles2004existence}]\label{theorem: Gram matrix realization}
    Let $H$ be any candidate Hermitian matrix. There exists some $\Psi$ such that $H$ is quantum realizable, i.e., $H = G_\Psi$ iff $H$ positive-semidefinite with principal diagonal $H_{ii} = 1$.
\end{theorem}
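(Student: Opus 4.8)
The plan is to prove both implications using the spectral (equivalently Cholesky) factorization of positive-semidefinite matrices, which is precisely the classical characterization of Gram matrices of tuples of unit vectors; the statement is then just this fact transported into the language of pure states.

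First I would handle necessity. Suppose $H = G_\Psi$ for some $\Psi = (\vert \psi_1\rangle\langle\psi_1\vert,\dots,\vert\psi_n\rangle\langle\psi_n\vert) \in \mathcal{P}(\mathcal{H})^n$, so that $H_{ij} = \langle\psi_i\vert\psi_j\rangle$. Then $H_{ii} = \langle\psi_i\vert\psi_i\rangle = 1$ since each $\vert\psi_i\rangle$ is normalized, and for any $c = (c_1,\dots,c_n) \in \mathbb{C}^n$ one computes $c^\dagger H c = \sum_{i,j}\overline{c_i}\,c_j\langle\psi_i\vert\psi_j\rangle = \left\Vert \sum_i c_i\vert\psi_i\rangle \right\Vert^2 \ge 0$, so $H$ is positive-semidefinite with unit diagonal.

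For sufficiency, assume $H$ is Hermitian, positive-semidefinite, with $H_{ii} = 1$. Write the spectral decomposition $H = \sum_k \lambda_k\vert e_k\rangle\langle e_k\vert$ with $\lambda_k \ge 0$ and set $B := H^{1/2} = \sum_k\sqrt{\lambda_k}\vert e_k\rangle\langle e_k\vert$, so that $H = B^\dagger B$ (a Cholesky factor would serve equally well). Taking $\vert\psi_i\rangle$ to be the $i$-th column of $B$, regarded as a vector in $\mathbb{C}^{\mathrm{rank}\,H}$ (the support of $H$), we get $H_{ij} = (B^\dagger B)_{ij} = \langle\psi_i\vert\psi_j\rangle$, and in particular $\bigl\Vert\vert\psi_i\rangle\bigr\Vert^2 = H_{ii} = 1$, so each $\vert\psi_i\rangle$ is a genuine pure state. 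Hence $\Psi = (\vert\psi_1\rangle\langle\psi_1\vert,\dots,\vert\psi_n\rangle\langle\psi_n\vert)$ realizes $H = G_\Psi$, completing the proof.

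I do not expect a genuine obstacle, since this is elementary linear algebra, but two points deserve a remark. First, the construction naturally lives in $\mathbb{C}^{\mathrm{rank}\,H}$; because the whole framework of these witnesses is dimension-independent, embedding the realizing states into any larger $\mathcal{H}$ is harmless. Second, the factorization is not unique, but any two choices $H = B_1^\dagger B_1 = B_2^\dagger B_2$ are related by $B_2 = V B_1$ for a partial isometry $V$, so the resulting tuples are unitarily equivalent with the same Gram matrix — consistent with the fact, already invoked above, that $G_\Psi$ determines $\Psi$ up to a global unitary. Finally, combining this characterization with the parametrization in Eq.~\eqref{eq: Bargmann matrix} reduces the three-state realizability problem to testing positive-semidefiniteness of $H(\pmb\Delta,\phi)$, which is the route the subsequent analysis takes.
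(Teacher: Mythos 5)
Your proof is correct. The paper does not supply its own proof of Theorem~\ref{theorem: Gram matrix realization}---it is imported from Ref.~\cite{chefles2004existence} and used as a black box---but your argument is exactly the standard one underlying that result: necessity from $c^\dagger H c = \bigl\Vert \sum_i c_i \vert\psi_i\rangle \bigr\Vert^2 \ge 0$ together with normalization giving the unit diagonal, and sufficiency from the factorization $H = B^\dagger B$ with the columns of $B$ serving as the realizing unit vectors. Nothing to flag.
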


Therefore, applying this to the parameterized candidate in Eq.~\eqref{eq: Bargmann matrix}, there exists a quantum realization  $G(\pmb{\Delta}_{\Psi})=H(\pmb\Delta)$ iff there exists a quantum realization for the tuple $(\Delta_{12},\Delta_{13},\Delta_{23},\phi)$. 

We may use this result to gain a better understanding of the possible values for $n$-order Bargmann invariants. Let us define the set of quantum-realizable values for third-order Bargmann invariants as:
\begin{equation}
    \mathcal{B}_3 := \left\{\Delta \in \mathbb{C} : \exists \Psi, \Delta = \langle \psi_1|\psi_2\rangle \langle \psi_2|\psi_3\rangle \langle \psi_3|\psi_1\rangle \right\}.
\end{equation}
It is straightforward to define such a set $\mathcal{B}_n$ also for $n$-th order invariants. Theorem \ref{theorem: Gram matrix realization} enables us to find the border of the set $\mathcal{B}_3$.

\begin{theorem}\label{theorem: gota}
    $\Delta \in \mathcal{B}_3$ if, and only if, $$1-3|\Delta|^{\frac{2}{3}}+2|\Delta| \cos(\phi)\geq 0,$$ where $\Delta = |\Delta|e^{i\phi}$. 
\end{theorem}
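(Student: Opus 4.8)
\emph{Proof strategy.} The plan is to convert the realizability question into a one--variable optimization using Theorem~\ref{theorem: Gram matrix realization}, and then finish it with the AM--GM inequality. Note first that any $\Delta\in\mathcal{B}_3$ has $|\Delta|=|\langle\psi_1|\psi_2\rangle|\,|\langle\psi_2|\psi_3\rangle|\,|\langle\psi_3|\psi_1\rangle|\le 1$, so we may restrict to $|\Delta|\le 1$, and the case $\Delta=0$ is trivial on both sides. Assume then $0<|\Delta|\le 1$ and write $\Delta=|\Delta|e^{i\phi}$. If $\Psi$ realizes $\Delta$ then, since $\Delta$ is a product of the three two--state overlaps, all of them are nonzero, so we are in the generic case and $G_\Psi$ can be brought by a choice of gauge to the form $H(\pmb{\Delta},\phi)$ of Eq.~\eqref{eq: Bargmann matrix}, with $\Delta_{ij}=|\langle\psi_i|\psi_j\rangle|^2\in(0,1]$ and $\sqrt{\Delta_{12}\Delta_{13}\Delta_{23}}=|\Delta|$. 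Conversely, by Theorem~\ref{theorem: Gram matrix realization} any $H(\pmb{\Delta},\phi)$ that is positive semidefinite equals $G_\Psi$ for some $\Psi$, and reading off its entries gives $\langle\psi_1|\psi_2\rangle\langle\psi_2|\psi_3\rangle\langle\psi_3|\psi_1\rangle=\sqrt{\Delta_{12}\Delta_{13}\Delta_{23}}\,e^{i\phi}$. So $\Delta\in\mathcal{B}_3$ iff there exist $\Delta_{12},\Delta_{13},\Delta_{23}\in[0,1]$ with $\Delta_{12}\Delta_{13}\Delta_{23}=|\Delta|^2$ such that $H(\pmb{\Delta},\phi)\succeq 0$.

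Next I would make $H(\pmb{\Delta},\phi)\succeq 0$ explicit. A Hermitian matrix is positive semidefinite iff \emph{all} of its principal minors are nonnegative; for $H(\pmb{\Delta},\phi)$ the $1\times1$ principal minors equal $1$ and the $2\times2$ ones equal $1-\Delta_{ij}\ge 0$, so the only binding condition is $\det H(\pmb{\Delta},\phi)\ge 0$. Expanding the determinant and using $\sqrt{\Delta_{12}\Delta_{13}\Delta_{23}}=|\Delta|$,
\begin{equation*}
\det H(\pmb{\Delta},\phi)=1-(\Delta_{12}+\Delta_{13}+\Delta_{23})+2|\Delta|\cos\phi .
\end{equation*}
Hence $\Delta\in\mathcal{B}_3$ iff the compact nonempty set $F=\{(\Delta_{12},\Delta_{13},\Delta_{23})\in[0,1]^3:\Delta_{12}\Delta_{13}\Delta_{23}=|\Delta|^2\}$ contains a point with $\Delta_{12}+\Delta_{13}+\Delta_{23}\le 1+2|\Delta|\cos\phi$, that is, iff $\min_{F}(\Delta_{12}+\Delta_{13}+\Delta_{23})\le 1+2|\Delta|\cos\phi$.

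Finally I would compute this minimum. By AM--GM, $\Delta_{12}+\Delta_{13}+\Delta_{23}\ge 3(\Delta_{12}\Delta_{13}\Delta_{23})^{1/3}=3|\Delta|^{2/3}$, with equality exactly at $\Delta_{12}=\Delta_{13}=\Delta_{23}=|\Delta|^{2/3}$, a point of $F$ since $|\Delta|^{2/3}\le 1$; thus $\min_{F}(\Delta_{12}+\Delta_{13}+\Delta_{23})=3|\Delta|^{2/3}$. Substituting, $\Delta\in\mathcal{B}_3$ iff $3|\Delta|^{2/3}\le 1+2|\Delta|\cos\phi$, i.e. iff $1-3|\Delta|^{2/3}+2|\Delta|\cos\phi\ge 0$, which is the statement.

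I expect the delicate part to be the reduction in the first paragraph rather than the arithmetic: one must check that the parametrized candidate $H(\pmb{\Delta},\phi)$ genuinely captures every realization of a given $\Delta$ (the gauge step, and the fact that Theorem~\ref{theorem: Gram matrix realization} applied to a positive-semidefinite $H(\pmb{\Delta},\phi)$ returns a tuple whose third-order Bargmann invariant is precisely $\Delta$), and one must invoke the full principal-minor characterization of positive semidefiniteness---nonnegativity of the leading minors alone would not suffice here. After that, the determinant expansion and the AM--GM step are routine.
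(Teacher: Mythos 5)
Your proof is correct and follows essentially the same route as the paper's: reduce realizability to positive semidefiniteness of the candidate $H(\pmb{\Delta},\phi)$ via Theorem~\ref{theorem: Gram matrix realization}, observe that the determinant is the only binding principal minor, and optimize over overlap triples with fixed product $|\Delta|^2$, landing on the symmetric point $\Delta_{12}=\Delta_{13}=\Delta_{23}=|\Delta|^{2/3}$. The only difference is that you close the optimization with AM--GM rather than the paper's Lagrange-multiplier computation, which is a cleaner and more self-contained way to certify that the symmetric point is the global minimizer of the sum.
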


\begin{figure}[t]
    \centering
    \includegraphics[width=\columnwidth]{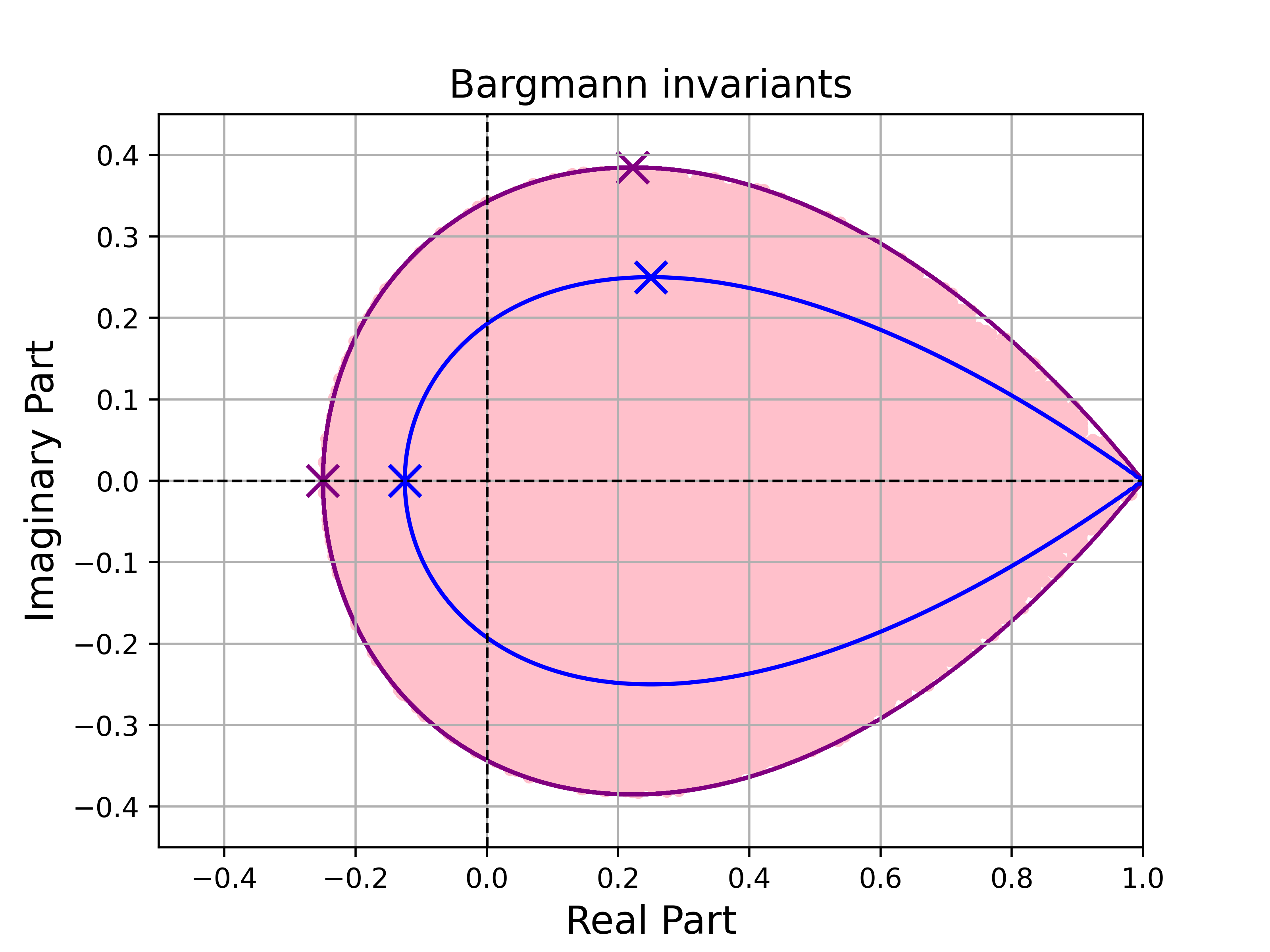}
    \caption{\textbf{Sets of quantum-realizable Bargmann invariants.} (color online) The region within the inner blue boundary is $\mathcal{B}_3$. The outer region describes the set $\mathcal{B}_4\vert_{\mathsf{circ}}$ of Bargmann invariants constructed from tuples $\Psi$ such that $G_{\Psi}$ is a circulant Gram matrix. The points (in pink) are $2.7 \times 10^6$ Bargmann invariants $\Delta_4$ for Haar random tuples $\Psi \in \mathcal{P}(\mathbb{C}^d)^4$ with $d=2,3$, and $4$ ($9 \times 10^5$ for each dimension), which supports the conjecture that $\mathcal{B}_4 = \mathrm{ConvHull}(\mathcal{B}_4 \vert_{\mathsf{circ}})$. The crosses mark invariants with most negative ($\text{Re}[\Delta_3] = -1/8, \text{Re}[\Delta_4] = -1/4$) and largest imaginary parts ($\text{Im}[\Delta_3]= 1/4, \text{Im}[\Delta_4] = 0.38490$).}
    \label{fig: gotas}
\end{figure}

The proof of all new theorems can be found in the Appendix. We plot this region in Fig.~\ref{fig: gotas}. While we do not solve the general problem of analytically characterizing the set $\mathcal{B}_n$ for $n \ge 4$, next we look at (possibly restricted) classes of tuples of states and their invariants. In the Appendix we prove that tuples extremizing the real or imaginary values of $\mathcal{B}_3$ have unitary invariants represented in a Gram matrix $G_{\Psi}$ of \emph{circulant} form~\cite{davis1979circulant}. We define the subsets $\mathcal{B}_n|_{\mathsf{circ}}$ as those $n$-th order Bargmann invariants  $\Delta = \Delta_\Psi$ corresponding to tuples $\Psi \mapsto G_{\Psi}$ where $G_{\Psi}$ is a \emph{circulant} Gram matrix with principal diagonal $(G_\Psi)_{ii}=1$. Using again theorem~\ref{theorem: Gram matrix realization} we can prove a statement similar to theorem~\ref{theorem: gota}.

\begin{theorem}\label{theorem: circulant gota}
    Any $\Delta$ is in the boundary of $\mathcal{B}_4\vert_{\mathsf{circ}}$ if, and only if, 
    \begin{equation*}
        \Delta = \frac{e^{i\phi}}{(\sin(\phi/4) + \cos(\phi/4))^4},
    \end{equation*}
    with $\phi \in [0, 2\pi)$.
\end{theorem}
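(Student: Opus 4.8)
The plan is to reduce the claim to an elementary feasibility computation over the two real parameters describing a Hermitian circulant Gram matrix, invoking Theorem~\ref{theorem: Gram matrix realization} to convert quantum realizability into positive-semidefiniteness. First I would write the most general $4\times 4$ Hermitian circulant matrix with unit diagonal as the circulant generated by its first row $(1,c_1,c_2,c_3)$; Hermiticity forces $c_3=\overline{c_1}$ and $c_2\in\mathbb{R}$, so such a candidate $H(c_1,c_2)$ is parametrized by $c_1\in\mathbb{C}$ and $c_2\in\mathbb{R}$. The structural observation that makes the problem tractable is that \emph{every} nearest-neighbour cyclic entry of a circulant matrix equals $c_1$, so for any tuple $\Psi$ realizing such a Gram matrix the fourth-order Bargmann invariant is $\Delta_\Psi=(G_\Psi)_{12}(G_\Psi)_{23}(G_\Psi)_{34}(G_\Psi)_{41}=c_1^{4}$. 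Combining this with Theorem~\ref{theorem: Gram matrix realization} yields $\mathcal{B}_4|_{\mathsf{circ}}=\{\,c_1^{4}: \exists\, c_2\in\mathbb{R}\ \text{with}\ H(c_1,c_2)\succeq 0\,\}$.

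Next I would diagonalize $H(c_1,c_2)$ by the order-$4$ discrete Fourier transform: its eigenvalues are $\lambda_j=1+c_1 i^{\,j}+c_2(-1)^{j}+\overline{c_1}(-i)^{\,j}$, which evaluate to the four numbers $1+c_2\pm 2\,\mathrm{Re}(c_1)$ and $1-c_2\pm 2\,\mathrm{Im}(c_1)$. Writing $c_1=r e^{i\theta}$ with $r\ge 0$, the condition $\lambda_j\ge 0$ for all $j$ is equivalent to the single sandwich inequality $-1+2r|\cos\theta|\le c_2\le 1-2r|\sin\theta|$. Such a $c_2$ exists if and only if $r\le r_{\max}(\theta):=1/(|\cos\theta|+|\sin\theta|)$, and whenever this holds every modulus in $[0,r_{\max}(\theta)]$ is attained, so the achievable values of $|c_1|$ along the ray $\arg(c_1)=\theta$ form exactly the interval $[0,r_{\max}(\theta)]$.

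Finally I would push this description through the map $c_1\mapsto\Delta=c_1^{4}$. Writing $\Delta=|\Delta|e^{i\phi}$, the equation $c_1^{4}=\Delta$ forces $|c_1|=|\Delta|^{1/4}$ and $\arg(c_1)\in\{\phi/4+k\pi/2:k=0,1,2,3\}$; since $|\cos(\theta+\pi/2)|+|\sin(\theta+\pi/2)|=|\cos\theta|+|\sin\theta|$, the feasibility bound $|c_1|\le r_{\max}(\arg c_1)$ is identical for all four preimages, so $\Delta\in\mathcal{B}_4|_{\mathsf{circ}}$ if and only if $|\Delta|\le(|\cos(\phi/4)|+|\sin(\phi/4)|)^{-4}$. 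Hence $\mathcal{B}_4|_{\mathsf{circ}}$ is exactly the closed, origin-centred star domain whose radius in direction $\phi$ equals $(|\cos(\phi/4)|+|\sin(\phi/4)|)^{-4}$; this radial function is continuous and bounded away from $0$ (it lies in $[1/4,1]$), so the origin is interior and the topological boundary is the curve $\Delta=e^{i\phi}/(|\cos(\phi/4)|+|\sin(\phi/4)|)^{4}$. For $\phi\in[0,2\pi)$ one has $\phi/4\in[0,\pi/2)$, where $\sin$ and $\cos$ are non-negative, which gives the stated formula. The only step requiring care is the bookkeeping of the four-fold phase ambiguity of $c_1$—verifying that it does not enlarge the realizable set and that the radial-maximum curve is genuinely the boundary; the diagonalization and the one-parameter feasibility check are routine.
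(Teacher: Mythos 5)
Your proposal is correct and follows essentially the same route as the paper: parametrize the Hermitian unit-diagonal circulant candidate, read off its eigenvalues $1+c_2\pm 2\,\mathrm{Re}(c_1)$ and $1-c_2\pm 2\,\mathrm{Im}(c_1)$ via the discrete Fourier transform, eliminate the free real parameter to get $|c_1|\le 1/(|\cos\theta|+|\sin\theta|)$, and push through $\Delta=c_1^4$. If anything you are slightly more careful than the paper on two points it glosses over --- letting the $(1,3)$ entry range over all of $\mathbb{R}$ (the paper's $\sqrt{\delta}$ formula in fact goes negative for some $\theta$) and checking that the four-fold preimage ambiguity of $c_1$ does not change the feasibility bound.
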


There are some curious aspects of this result. For example, we can show that the boundary of this set is completely characterized by 4-tuples of \emph{qubit} states, i.e., $\Psi \in \mathcal{P}(\mathbb{C}^2)^4$ (see Appendix). Moreover, numerical evidence strongly suggests that the boundary of the set circulant matrices $\mathcal{B}_4\vert_{\mathsf{circ}}$ is in fact equal to the boundary of the set \emph{all} 4-state invariants $\mathcal{B}_4$ (see Fig.~\ref{fig: gotas}). 

\textit{Physical implications.}-- Theorem~\ref{theorem: gota} provides a complete characterization of the set of all possible third-order invariants $\mathcal{B}_3$ corresponding to pure states in any dimension. Its shape and extrema leads to some implications to various related topics. For example, we start by recalling that a quasiprobability distribution known as the Kirkwood-Dirac (KD) distribution~\cite{lostaglio2022kirkwooddirac,yungerhalpern2018quasi,arvidssonshukur2021conditions} is defined in terms of third-order invariants \cite{wagner2023quantum}. For a given state $\rho$, the distribution value at each discrete phase space point is given by a third-order invariant involving $\rho$ and two states $\vert i\rangle \in I, \vert f \rangle \in F$ from different reference bases $I,F$, given by $\xi(\rho|i,f):=\langle f|i\rangle\langle i|\rho|f\rangle$. Negativity of this distribution has been connected to advantage in quantum metrology~\cite{arvidssonshukur2020quantumadvantage}.  Theorem~\ref{theorem: gota} bounds the optimal imaginary and negative values of this distribution, for \emph{any} choice of state and pair of reference basis. The most negative value of any KD distribution is therefore $-1/8$ and the largest imaginary value is $1/4$ (see Fig.~\ref{fig: gotas}).

On a related note, quantum mechanical weak values \cite{aharonov1988how} are also a function of Bargmann invariants \cite{wagner2023quantum, wagner2023anomalous}. The weak value $\langle P_v \rangle_w$ of a single projection operator $P_v = \vert v \rangle \langle v \vert$ is given by the ratio between a third-order invariant and a second order invariant. Since second-order invariants are real-valued, the negativity/imaginarity of $\langle P_v \rangle_w$, a condition known as weak value anomaly~\cite{wagner2023anomalous}, is directly due to the negativity or imaginarity of third-order invariants. Theorem~\ref{theorem: gota} bounds the contribution to weak value anomaly arising from a third-order invariant with a nonclassical phase $\phi \neq 0$.

Theorem~\ref{theorem: gota} provides strict bounds between the absolute value of a third-order invariant, given as a function of overlaps as $|\Delta_{123}| = \sqrt{\Delta_{12}\Delta_{13}\Delta_{23}}$, and its phase.
In the context of characterization of multiphoton indistinguishability, these phases correspond exactly to so-called collective photonic phases \cite{menssen2017distinguishability,pont2022quantifying,jones2022distinguishability,seron2023boson}. In this setting, each state $\vert \psi\rangle$ in the tuple $\Psi$ describes the internal degrees of freedom of a single photon. In such experiments, one is often interested in maximizing photonic indistinguishability, where all overlaps and higher-order invariants approach the value $1$. In Fig.~\ref{fig: gotas}, this corresponds to a unitary-invariant close to $1$. The analytical boundary of $\mathcal{B}_3$ can then be used to give precise bounds on the possible range of this collective photonic phase, as a function of overlaps, an understanding that is important for the complete characterization of 3-photon indistinguishability. Our preliminary results in Theorem \ref{theorem: circulant gota} start to address the more complex problem of the possible ranges of collective phases for $n>3$ photons.

 As mentioned before, geometric phases acquired after a cyclic sequence of projective measurements are phases of Bargmann invariants, as noticed already in Ref.~\cite{Pancharatnam56}, and elaborated on in later work \cite{Berry09} (for a review see Ref.~\cite{Chruscinski04}). The absolute value squared of the associated Bargmann invariant gives the fraction of events that are postselected in such a sequential experiment. Our theorems \ref{theorem: gota} and \ref{theorem: circulant gota} then give precise trade-offs between this postselection probability and the geometric phases that can be observed.

\textit{Witnessing imaginarity with two-state overlaps.}-- Let us now consider the sets of quantum realizable \emph{overlap tuples}
\begin{equation}
    \mathcal{Q}_{3} := \left\{\pmb{\Delta} \equiv \left(\begin{matrix}\Delta_{12}\\\Delta_{13}\\\Delta_{23}\end{matrix}\right) : \exists \Psi \in \mathcal{P}(\mathcal{H})^3, \pmb\Delta = \pmb\Delta_{\Psi}\right\} .
\end{equation}
Above, $\Delta_{\Psi} = (|\langle \psi_1|\psi_2\rangle|^2, |\langle \psi_1|\psi_3\rangle|^2,|\langle \psi_2|\psi_3\rangle|^2)$. These sets have been widely studied~\cite{galvaobrod2020quantum,wagner2022inequalities,wagner2024coherence,brod2019witnessing} and experimentally investigated~\cite{giordani2020experimental,giordani2021witnesses,giordani2023experimental,pont2022quantifying}. We define also $\mathcal{R}_3\subseteq  \mathcal{Q}_3$ as the set of overlaps realizable by imaginarity-free pure states. Precisely, this is the subset of all $ \pmb\Delta = \pmb\Delta_{\Psi^{\mathbb{R}}} \in \mathcal{Q}_3$ which are realized by a triple of quantum states with real amplitudes, i.e. $\Psi^{\mathbb{R}} \in \mathcal{R}(\mathcal{H},\mathbb{A})^3$ for some Hilbert space $\mathcal{H}$ and basis $\mathbb{A}$. We can now state the following theorem.

\begin{theorem}\label{theorem: 3 overlaps cannot witness}
    Every overlap triple has an imaginarity-free quantum realization, i.e., $\mathcal{Q}_3 = \mathcal{R}_3$.
\end{theorem}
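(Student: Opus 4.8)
\textit{Proof proposal.} The plan is to translate the claim into a statement about $3\times3$ Gram matrices via Theorem~\ref{theorem: Gram matrix realization}, and then observe that positive-semidefiniteness of the Bargmann matrix in Eq.~\eqref{eq: Bargmann matrix} is only ever helped by sending the third-order phase to $\phi=0$, which turns the matrix real. The inclusion $\mathcal{R}_3\subseteq\mathcal{Q}_3$ is immediate, so the content is the reverse inclusion: given $\pmb\Delta\in\mathcal{Q}_3$, realized by some $\Psi$ with $\pmb\Delta=\pmb\Delta_\Psi$, produce a real-amplitude triple with the same pairwise overlaps.

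First I would dispose of the degenerate cases. If some $\Delta_{ij}=0$ or $\Delta_{ij}=1$, then the associated third-order invariant either vanishes or is forced (by Cauchy--Schwarz saturation) to equal $\pm\sqrt{\Delta_{12}\Delta_{13}\Delta_{23}}$, so $G_\Psi$ is gauge-equivalent to a real symmetric matrix and $\pmb\Delta\in\mathcal{R}_3$ directly. In the generic case $\Delta_{ij}\in(0,1)$, Theorem~\ref{theorem: Gram matrix realization} gives that the Hermitian candidate $H(\pmb\Delta,\phi)$ of Eq.~\eqref{eq: Bargmann matrix} is positive-semidefinite for the particular value $\phi=\arg\Delta_{123}^\Psi$.

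The key step is the remark that a $3\times3$ Hermitian matrix with unit diagonal is positive-semidefinite if and only if all of its $2\times2$ principal minors \emph{and} its determinant are nonnegative (it is the full set of principal minors, not just the leading ones, that must be checked). The $2\times2$ minors are $1-\Delta_{ij}\geq 0$, which hold automatically, while
\begin{equation*}
\det H(\pmb\Delta,\phi)=1-\Delta_{12}-\Delta_{13}-\Delta_{23}+2\sqrt{\Delta_{12}\Delta_{13}\Delta_{23}}\,\cos\phi
\end{equation*}
is nondecreasing in $\cos\phi$. Hence $\det H(\pmb\Delta,\phi)\geq 0$ implies $\det H(\pmb\Delta,0)\geq\det H(\pmb\Delta,\phi)\geq 0$, so the \emph{real symmetric} matrix $H(\pmb\Delta,0)$ is itself positive-semidefinite with unit diagonal. (This is consistent with Theorem~\ref{theorem: gota}: the condition there with $\cos\phi=1$ is the weakest.) A real symmetric positive-semidefinite matrix with unit diagonal is the Gram matrix of a tuple of real unit vectors — the real-amplitude analogue of Theorem~\ref{theorem: Gram matrix realization}, obtained by a real Cholesky or eigendecomposition — and the corresponding pure states $\Psi^{\mathbb{R}}\in\mathcal{R}(\mathcal{H},\mathbb{A})^3$ have pairwise overlap magnitudes exactly $\pmb\Delta$. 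Thus $\pmb\Delta\in\mathcal{R}_3$, giving $\mathcal{Q}_3=\mathcal{R}_3$.

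I do not expect a serious obstacle here: the argument is essentially Theorem~\ref{theorem: Gram matrix realization} plus monotonicity of a determinant in one parameter. The points requiring care are bookkeeping ones — checking that the degenerate overlap values are genuinely covered, and invoking the real-vector version of the Gram-matrix realization cleanly rather than the Hermitian one — together with the standard but easy-to-misstate fact that positive-semidefiniteness requires \emph{all} principal minors to be nonnegative, which is precisely what lets the verification collapse to the single determinant inequality.
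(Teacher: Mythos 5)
Your proposal is correct and follows essentially the same route as the paper's proof: reduce realizability to positive-semidefiniteness of the candidate $H(\pmb\Delta,\phi)$, note that the only nontrivial principal minor (the determinant) is nondecreasing in $\cos\phi$, conclude that $H(\pmb\Delta,0)$ is a real PSD correlation matrix, and extract real-amplitude states via a real Cholesky factorization. Your handling of the degenerate overlaps ($\Delta_{ij}\in\{0,1\}$) via gauge-fixing is a slightly cleaner bookkeeping variant of the paper's explicit construction, but the argument is the same.
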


This shows that for the scenario of an arbitrary triple of quantum states, information on the three overlaps $\Delta_{12}, \Delta_{13}, \Delta_{23}$ cannot be used to conclusively witness set imaginarity. Now let $\mathcal{Q}_4, \mathcal{R}_4$ be defined similarly as $\mathcal{Q}_3,\mathcal{R}_3$, but for tuples $\pmb\Delta = (\Delta_{12},\Delta_{13},\Delta_{14},\Delta_{23},\Delta_{24},\Delta_{34})$. We can show that,
\begin{theorem}\label{theorem: overlaps can witness}
    Not every $6$-tuple of overlaps has a quantum realization using 4 states with only real-valued amplitudes. Succinctly, $\mathcal{R}_4 \neq \mathcal{Q}_4$. 
\end{theorem}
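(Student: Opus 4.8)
We would prove Theorem~\ref{theorem: overlaps can witness} by recasting both $\mathcal{Q}_4$ and $\mathcal{R}_4$ as Hermitian matrix‑completion conditions and then exhibiting a single separating overlap $6$‑tuple. By Theorem~\ref{theorem: Gram matrix realization}, an overlap tuple $\pmb\Delta=(\Delta_{12},\Delta_{13},\Delta_{14},\Delta_{23},\Delta_{24},\Delta_{34})$ lies in $\mathcal{Q}_4$ iff there is a positive‑semidefinite Hermitian $4\times4$ matrix $H$ with $H_{ii}=1$ and $|H_{ij}|^2=\Delta_{ij}$. The same statement over $\mathbb{R}$ (a real positive‑semidefinite matrix with unit diagonal is the Gram matrix of real unit vectors, via a factorization $M=B^\top B$) shows that $\pmb\Delta\in\mathcal{R}_4$ iff such a completion can be taken real, i.e.\ $H_{ij}=\epsilon_{ij}\sqrt{\Delta_{ij}}$ with $\epsilon_{ij}\in\{+1,-1\}$. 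Hence it is enough to find a $\pmb\Delta$ with a Hermitian positive‑semidefinite completion but no real one.

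For the witness I would take the four qubit states $\vert\psi_1\rangle=\vert0\rangle$, $\vert\psi_2\rangle=\tfrac1{\sqrt2}(\vert0\rangle+\vert1\rangle)$, $\vert\psi_3\rangle=\tfrac1{\sqrt2}(\vert0\rangle-\vert1\rangle)$, $\vert\psi_4\rangle=\tfrac1{\sqrt2}(\vert0\rangle+i\vert1\rangle)$, whose overlap tuple is $\pmb\Delta_\Psi=(\tfrac12,\tfrac12,\tfrac12,0,\tfrac12,\tfrac12)$; here $\pmb\Delta_\Psi\in\mathcal{Q}_4$ holds by construction. The vanishing overlap $\Delta_{23}=0$ is chosen on purpose: it forces $H_{23}=0$ in \emph{every} completion, collapsing the remaining freedom to the five signs $\epsilon_{12},\epsilon_{13},\epsilon_{14},\epsilon_{24},\epsilon_{34}$, while $\vert\psi_4\rangle$ keeps the configuration genuinely complex. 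To exclude a real completion, note that conjugating by $D=\mathrm{diag}(\pm1,\pm1,\pm1,\pm1)$ preserves positive semidefiniteness and sends $\epsilon_{ij}\mapsto d_id_j\epsilon_{ij}$; using this gauge one may fix $\epsilon_{12}=\epsilon_{13}=\epsilon_{14}=+1$, leaving the four candidate matrices $M$ indexed by $(\epsilon_{24},\epsilon_{34})\in\{\pm1\}^2$. In each of these, the leading $3\times3$ block on indices $\{1,2,3\}$ has unit diagonal and off‑diagonal entries $\tfrac1{\sqrt2},\tfrac1{\sqrt2},0$, hence determinant $1-2\cdot\tfrac12=0$ with null vector $w=(1,-\tfrac1{\sqrt2},-\tfrac1{\sqrt2})$; evaluating the quadratic form on $v=(1,-\tfrac1{\sqrt2},-\tfrac1{\sqrt2},x)$ gives $v^\top M v=x^2+2\kappa x$ with $\kappa=\tfrac1{\sqrt2}-\tfrac12(\epsilon_{24}+\epsilon_{34})\neq0$ in all four cases, so $v^\top M v<0$ for a suitable $x$ and no real completion is positive semidefinite. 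Thus $\pmb\Delta_\Psi\notin\mathcal{R}_4$, and therefore $\mathcal{R}_4\subsetneq\mathcal{Q}_4$.

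The main obstacle is picking the right example, not the ensuing (routine) case analysis: several natural candidates fail to separate the two sets because their optimal Hermitian completion happens to have all independent cycle fluxes in $\{0,\pi\}$ and is thus reproduced by a sign pattern. This is so for the all‑equal tuple $(t,\dots,t)$, which for every $t\le1$ is realized by the all‑plus real completion, and also for the ``square'' overlap patterns arising from circulant Gram matrices, whose $\lambda_{\min}$‑optimal flux is the real value $\pi$. Driving one overlap to $0$ is exactly what breaks this: it suppresses enough sign freedom that the residual four real completions are all too singular, while the Hermitian completion supplied by $\vert\psi_4\rangle$ survives. A minor extra step, if a full‑dimensional family of witnesses is wanted, is to replace $\Delta_{23}=0$ by a small $\Delta_{23}=\delta>0$ and rerun the (slightly longer, but still finite and continuous in $\delta$) case check.
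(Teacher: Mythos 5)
Your proof is correct and follows the same overall strategy as the paper: reduce membership in $\mathcal{Q}_4$ and $\mathcal{R}_4$ to positive-semidefinite Hermitian (resp.\ real) completions via Theorem~\ref{theorem: Gram matrix realization}, note that a real completion must be one of finitely many sign patterns, and exhibit a quantum-realizable overlap $6$-tuple for which every sign pattern fails. The difference lies in the witness and the verification. The paper uses $\ket{0},\ket{+},\ket{-i}$ and a fourth state with an irrational overlap pattern, then checks all $8$ (and, for gauge-independence, all $64$) sign-pattern matrices numerically, reporting negative minimal eigenvalues. Your choice $\ket{0},\ket{+},\ket{-},\ket{+i}$ with $\pmb\Delta=(\tfrac12,\tfrac12,\tfrac12,0,\tfrac12,\tfrac12)$ exploits the vanishing overlap $\Delta_{23}=0$ to kill one sign, reduces to four cases by the diagonal-sign gauge (your $D$-conjugation argument is the clean way to justify the paper's reduction from $2^6$ to $2^3$, here to $2^2$), and rules them all out analytically via the null vector $(1,-\tfrac1{\sqrt2},-\tfrac1{\sqrt2})$ of the singular $\{1,2,3\}$ principal minor and the nonvanishing cross term $\kappa=\tfrac1{\sqrt2}-\tfrac12(\epsilon_{24}+\epsilon_{34})$. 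I verified the overlaps and the four values of $\kappa$; the argument is complete and hand-checkable, which is a genuine improvement in rigor over the paper's numerical eigenvalue check, at the cost of a more degenerate (boundary) witness --- your closing remark about perturbing $\Delta_{23}$ to small $\delta>0$ addresses that if an interior example is wanted.
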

To identify tuples of overlaps that are in $\mathcal{Q}_4$ but not in $\mathcal{R}_4$, the strategy is to pick a tuple of overlaps that is quantum-realizable, then consider all possible candidates for Gram matrices that could describe this scenario with given overlaps, but imposing the restriction that all matrix elements must have real-valued $\pm 1$ phases. Even without fixing the gauge, there are only $2^6=64$ such real-valued variant matrices; fixing the gauge reduces this number to $2^3=8$ variants. In the Appendix we report one such example where all real-valued variants have negative eigenvalues, which (by theorem \ref{theorem: Gram matrix realization}), means 4-state imaginarity can be witnessed using overlap values only. 

Such an experimental test of imaginary can be performed by estimating the $6$ overlaps for the $4$ single-qubit states presented in the Appendix, in the proof of Theorem \ref{theorem: overlaps can witness} above. This is a considerably simpler experimental set-up than that required for measuring the imaginarity of three states, using cycle tests \cite{oszmaniec2021measuring}. As we have mentioned before, overlaps can be estimated using either SWAP-tests or Bell-basis projectors in any quantum computational model. Moreover, linear-optical implementations can estimate overlaps from prepare-and-measure statistics~\cite{giordani2023experimental} or from the Hong-Ou-Mandel effect, as in Ref.~\cite{giordani2021witnesses}. These are less demanding than estimating third-order invariants, which requires sources of three single-photons and the interference of these photons in particular three-mode interferometers \cite{menssen2017distinguishability, pont2022quantifying}.

\textit{Discussion and further directions.}-- In this work we have investigated unitary-invariants as witnesses of quantum imaginarity. We characterize the set $\mathcal{B}_3$ of all third-order invariants corresponding to pure states, showing imaginarity cannot be witnessed only by overlap measurements in this case. We also investigate  higher-order invariants, finding the region in the complex plane occupied by invariants of tuples of states whose Gram matrix representations are of circulant form. We propose witnesses of quantum imaginarity of 4 states, and which require only the estimation of their 6 pairwise overlaps. Our results have implications for the quantitative description of geometric phases, Kirkwood-Dirac quasiprobability representations, multiphoton indistinguishability, and quantum-mechanical weak values. They also motivate further investigations on novel experiments to test the imaginarity of quantum theory, and its applications. 

\textit{Acknowledgements.}-- We would like to thank Laurens Walleghem for useful comments and discussions. CF and RW contributed equally to this work. CF acknowledges support from FCT -- Fundação para a Ciência e a Tecnologia (Portugal) through PhD Grant SFRH/BD/150770/2020. RW acknowledges support from FCT -- Fundação para a Ciência e a Tecnologia (Portugal) through PhD Grant SFRH/BD/151199/2021. LN and EFG acknowledge support from FCT-Fundação para a Ciência e a Tecnologia (Portugal) via the Project No. CEECINST/00062/2018. This work was supported by the ERC Advanced Grant QU-BOSS, GA no. 884676.

\bibliography{bibliography}

\begin{thebibliography}{70}%
\makeatletter
\providecommand \@ifxundefined [1]{%
 \@ifx{#1\undefined}
}%
\providecommand \@ifnum [1]{%
 \ifnum #1\expandafter \@firstoftwo
 \else \expandafter \@secondoftwo
 \fi
}%
\providecommand \@ifx [1]{%
 \ifx #1\expandafter \@firstoftwo
 \else \expandafter \@secondoftwo
 \fi
}%
\providecommand \natexlab [1]{#1}%
\providecommand \enquote  [1]{``#1''}%
\providecommand \bibnamefont  [1]{#1}%
\providecommand \bibfnamefont [1]{#1}%
\providecommand \citenamefont [1]{#1}%
\providecommand \href@noop [0]{\@secondoftwo}%
\providecommand \href [0]{\begingroup \@sanitize@url \@href}%
\providecommand \@href[1]{\@@startlink{#1}\@@href}%
\providecommand \@@href[1]{\endgroup#1\@@endlink}%
\providecommand \@sanitize@url [0]{\catcode `\\12\catcode `\$12\catcode `\&12\catcode `\#12\catcode `\^12\catcode `\_12\catcode `\%12\relax}%
\providecommand \@@startlink[1]{}%
\providecommand \@@endlink[0]{}%
\providecommand \url  [0]{\begingroup\@sanitize@url \@url }%
\providecommand \@url [1]{\endgroup\@href {#1}{\urlprefix }}%
\providecommand \urlprefix  [0]{URL }%
\providecommand \Eprint [0]{\href }%
\providecommand \doibase [0]{https://doi.org/}%
\providecommand \selectlanguage [0]{\@gobble}%
\providecommand \bibinfo  [0]{\@secondoftwo}%
\providecommand \bibfield  [0]{\@secondoftwo}%
\providecommand \translation [1]{[#1]}%
\providecommand \BibitemOpen [0]{}%
\providecommand \bibitemStop [0]{}%
\providecommand \bibitemNoStop [0]{.\EOS\space}%
\providecommand \EOS [0]{\spacefactor3000\relax}%
\providecommand \BibitemShut  [1]{\csname bibitem#1\endcsname}%
\let\auto@bib@innerbib\@empty
\bibitem [{\citenamefont {Hickey}\ and\ \citenamefont {Gour}(2018)}]{hickey2018quantifying}%
  \BibitemOpen
  \bibfield  {author} {\bibinfo {author} {\bibfnamefont {A.}~\bibnamefont {Hickey}}\ and\ \bibinfo {author} {\bibfnamefont {G.}~\bibnamefont {Gour}},\ }\bibfield  {title} {\bibinfo {title} {Quantifying the imaginarity of quantum mechanics},\ }\href {https://doi.org/10.1088/1751-8121/aabe9c} {\bibfield  {journal} {\bibinfo  {journal} {J. Phys. A: Math. Theor.}\ }\textbf {\bibinfo {volume} {51}},\ \bibinfo {pages} {414009} (\bibinfo {year} {2018})}\BibitemShut {NoStop}%
\bibitem [{\citenamefont {Chitambar}\ and\ \citenamefont {Gour}(2019)}]{chitambar2019quantum}%
  \BibitemOpen
  \bibfield  {author} {\bibinfo {author} {\bibfnamefont {E.}~\bibnamefont {Chitambar}}\ and\ \bibinfo {author} {\bibfnamefont {G.}~\bibnamefont {Gour}},\ }\bibfield  {title} {\bibinfo {title} {Quantum resource theories},\ }\href {https://doi.org/10.1103/revmodphys.91.025001} {\bibfield  {journal} {\bibinfo  {journal} {Rev. Mod. Phys.}\ }\textbf {\bibinfo {volume} {91}},\ \bibinfo {pages} {025001} (\bibinfo {year} {2019})}\BibitemShut {NoStop}%
\bibitem [{\citenamefont {Baumgratz}\ \emph {et~al.}(2014)\citenamefont {Baumgratz}, \citenamefont {Cramer},\ and\ \citenamefont {Plenio}}]{baumgratz2014quantifying}%
  \BibitemOpen
  \bibfield  {author} {\bibinfo {author} {\bibfnamefont {T.}~\bibnamefont {Baumgratz}}, \bibinfo {author} {\bibfnamefont {M.}~\bibnamefont {Cramer}},\ and\ \bibinfo {author} {\bibfnamefont {M.~B.}\ \bibnamefont {Plenio}},\ }\bibfield  {title} {\bibinfo {title} {Quantifying coherence},\ }\href {https://doi.org/10.1103/PhysRevLett.113.140401} {\bibfield  {journal} {\bibinfo  {journal} {Phys. Rev. Lett.}\ }\textbf {\bibinfo {volume} {113}},\ \bibinfo {pages} {140401} (\bibinfo {year} {2014})}\BibitemShut {NoStop}%
\bibitem [{\citenamefont {Stueckelberg}(1960)}]{stueckelberg1960quantum}%
  \BibitemOpen
  \bibfield  {author} {\bibinfo {author} {\bibfnamefont {E.~C.}\ \bibnamefont {Stueckelberg}},\ }\bibfield  {title} {\bibinfo {title} {Quantum theory in real {H}ilbert space},\ }\href {https://doi.org/https://doi.org/10.5169/seals-113093} {\bibfield  {journal} {\bibinfo  {journal} {Helv. Phys. Acta}\ }\textbf {\bibinfo {volume} {33}},\ \bibinfo {pages} {458} (\bibinfo {year} {1960})}\BibitemShut {NoStop}%
\bibitem [{\citenamefont {Stueckelberg~von Breidenbach}\ and\ \citenamefont {Guenin}(1961)}]{stueckelberg1961quantum}%
  \BibitemOpen
  \bibfield  {author} {\bibinfo {author} {\bibfnamefont {E.~C.~G.}\ \bibnamefont {Stueckelberg~von Breidenbach}}\ and\ \bibinfo {author} {\bibfnamefont {M.}~\bibnamefont {Guenin}},\ }\bibfield  {title} {\bibinfo {title} {Quantum theory in real {H}ilbert space {II} ({A}ddenda and {E}rrats)},\ }\href@noop {} {\bibfield  {journal} {\bibinfo  {journal} {Helv. Phys. Acta}\ }\textbf {\bibinfo {volume} {34}},\ \bibinfo {pages} {621} (\bibinfo {year} {1961})}\BibitemShut {NoStop}%
\bibitem [{\citenamefont {Aleksandrova}\ \emph {et~al.}(2013)\citenamefont {Aleksandrova}, \citenamefont {Borish},\ and\ \citenamefont {Wootters}}]{antoniya2013real}%
  \BibitemOpen
  \bibfield  {author} {\bibinfo {author} {\bibfnamefont {A.}~\bibnamefont {Aleksandrova}}, \bibinfo {author} {\bibfnamefont {V.}~\bibnamefont {Borish}},\ and\ \bibinfo {author} {\bibfnamefont {W.~K.}\ \bibnamefont {Wootters}},\ }\bibfield  {title} {\bibinfo {title} {Real-vector-space quantum theory with a universal quantum bit},\ }\href {https://doi.org/10.1103/PhysRevA.87.052106} {\bibfield  {journal} {\bibinfo  {journal} {Phys. Rev. A}\ }\textbf {\bibinfo {volume} {87}},\ \bibinfo {pages} {052106} (\bibinfo {year} {2013})}\BibitemShut {NoStop}%
\bibitem [{\citenamefont {Hardy}\ and\ \citenamefont {Wootters}(2011)}]{hardy2011limited}%
  \BibitemOpen
  \bibfield  {author} {\bibinfo {author} {\bibfnamefont {L.}~\bibnamefont {Hardy}}\ and\ \bibinfo {author} {\bibfnamefont {W.~K.}\ \bibnamefont {Wootters}},\ }\bibfield  {title} {\bibinfo {title} {Limited {H}olism and {R}eal-{V}ector-{S}pace {Q}uantum {T}heory},\ }\href {https://doi.org/10.1007/s10701-011-9616-6} {\bibfield  {journal} {\bibinfo  {journal} {Foundations of Physics}\ }\textbf {\bibinfo {volume} {42}},\ \bibinfo {pages} {454} (\bibinfo {year} {2011})}\BibitemShut {NoStop}%
\bibitem [{\citenamefont {Wootters}(2010)}]{wootters2010entanglement}%
  \BibitemOpen
  \bibfield  {author} {\bibinfo {author} {\bibfnamefont {W.~K.}\ \bibnamefont {Wootters}},\ }\bibfield  {title} {\bibinfo {title} {Entanglement {S}haring in {R}eal-{V}ector-{S}pace {Q}uantum {T}heory},\ }\href {https://doi.org/10.1007/s10701-010-9488-1} {\bibfield  {journal} {\bibinfo  {journal} {Foundations of Physics}\ }\textbf {\bibinfo {volume} {42}},\ \bibinfo {pages} {19} (\bibinfo {year} {2010})}\BibitemShut {NoStop}%
\bibitem [{\citenamefont {Wootters}(2015)}]{wootters2015optimal}%
  \BibitemOpen
  \bibfield  {author} {\bibinfo {author} {\bibfnamefont {W.~K.}\ \bibnamefont {Wootters}},\ }\bibfield  {title} {\bibinfo {title} {Optimal {I}nformation {T}ransfer and {R}eal-{V}ector-{S}pace {Q}uantum {T}heory},\ }in\ \href {https://doi.org/10.1007/978-94-017-7303-4_2} {\emph {\bibinfo {booktitle} {Fundamental Theories of Physics}}}\ (\bibinfo  {publisher} {Springer Netherlands},\ \bibinfo {year} {2015})\ pp.\ \bibinfo {pages} {21--43}\BibitemShut {NoStop}%
\bibitem [{\citenamefont {Rudolph}\ and\ \citenamefont {Grover}(2002)}]{rudolph2002rebit}%
  \BibitemOpen
  \bibfield  {author} {\bibinfo {author} {\bibfnamefont {T.}~\bibnamefont {Rudolph}}\ and\ \bibinfo {author} {\bibfnamefont {L.}~\bibnamefont {Grover}},\ }\href {https://doi.org/https://doi.org/10.48550/arXiv.quant-ph/0210187} {\bibinfo {title} {A 2 rebit gate universal for quantum computing}},\ \bibinfo {howpublished} {arXiv:quant-ph/0210187 [quant-ph]} (\bibinfo {year} {2002})\BibitemShut {NoStop}%
\bibitem [{\citenamefont {Aharonov}(2003)}]{aharonov2003simple}%
  \BibitemOpen
  \bibfield  {author} {\bibinfo {author} {\bibfnamefont {D.}~\bibnamefont {Aharonov}},\ }\href {https://doi.org/https://doi.org/10.48550/arXiv.quant-ph/0301040} {\bibinfo {title} {A {S}imple {P}roof that {T}offoli and {H}adamard are {Q}uantum {U}niversal}},\ \bibinfo {howpublished} {arXiv:quant-ph/0301040 [quant-ph]} (\bibinfo {year} {2003})\BibitemShut {NoStop}%
\bibitem [{\citenamefont {McKague}\ \emph {et~al.}(2009)\citenamefont {McKague}, \citenamefont {Mosca},\ and\ \citenamefont {Gisin}}]{McKague2009simulating}%
  \BibitemOpen
  \bibfield  {author} {\bibinfo {author} {\bibfnamefont {M.}~\bibnamefont {McKague}}, \bibinfo {author} {\bibfnamefont {M.}~\bibnamefont {Mosca}},\ and\ \bibinfo {author} {\bibfnamefont {N.}~\bibnamefont {Gisin}},\ }\bibfield  {title} {\bibinfo {title} {Simulating {Q}uantum {S}ystems {U}sing {R}eal {H}ilbert {S}paces},\ }\href {https://doi.org/10.1103/physrevlett.102.020505} {\bibfield  {journal} {\bibinfo  {journal} {Phys. Rev. Lett.}\ }\textbf {\bibinfo {volume} {102}},\ \bibinfo {pages} {020505} (\bibinfo {year} {2009})}\BibitemShut {NoStop}%
\bibitem [{\citenamefont {Renou}\ \emph {et~al.}(2021)\citenamefont {Renou}, \citenamefont {Trillo}, \citenamefont {Weilenmann}, \citenamefont {Le}, \citenamefont {Tavakoli}, \citenamefont {Gisin}, \citenamefont {Ac{\'{\i}}n},\ and\ \citenamefont {Navascu{\'{e}}s}}]{renou2021quantum}%
  \BibitemOpen
  \bibfield  {author} {\bibinfo {author} {\bibfnamefont {M.-O.}\ \bibnamefont {Renou}}, \bibinfo {author} {\bibfnamefont {D.}~\bibnamefont {Trillo}}, \bibinfo {author} {\bibfnamefont {M.}~\bibnamefont {Weilenmann}}, \bibinfo {author} {\bibfnamefont {T.~P.}\ \bibnamefont {Le}}, \bibinfo {author} {\bibfnamefont {A.}~\bibnamefont {Tavakoli}}, \bibinfo {author} {\bibfnamefont {N.}~\bibnamefont {Gisin}}, \bibinfo {author} {\bibfnamefont {A.}~\bibnamefont {Ac{\'{\i}}n}},\ and\ \bibinfo {author} {\bibfnamefont {M.}~\bibnamefont {Navascu{\'{e}}s}},\ }\bibfield  {title} {\bibinfo {title} {Quantum theory based on real numbers can be experimentally falsified},\ }\href {https://doi.org/10.1038/s41586-021-04160-4} {\bibfield  {journal} {\bibinfo  {journal} {Nature}\ }\textbf {\bibinfo {volume} {600}},\ \bibinfo {pages} {625} (\bibinfo {year} {2021})}\BibitemShut {NoStop}%
\bibitem [{\citenamefont {Li}\ \emph {et~al.}(2022)\citenamefont {Li}, \citenamefont {Mao}, \citenamefont {Weilenmann}, \citenamefont {Tavakoli}, \citenamefont {Chen}, \citenamefont {Feng}, \citenamefont {Yang}, \citenamefont {Renou}, \citenamefont {Trillo}, \citenamefont {Le}, \citenamefont {Gisin}, \citenamefont {Ac{\'{\i}}n}, \citenamefont {Navascu{\'{e}}s}, \citenamefont {Wang},\ and\ \citenamefont {Fan}}]{li2022testing}%
  \BibitemOpen
  \bibfield  {author} {\bibinfo {author} {\bibfnamefont {Z.-D.}\ \bibnamefont {Li}}, \bibinfo {author} {\bibfnamefont {Y.-L.}\ \bibnamefont {Mao}}, \bibinfo {author} {\bibfnamefont {M.}~\bibnamefont {Weilenmann}}, \bibinfo {author} {\bibfnamefont {A.}~\bibnamefont {Tavakoli}}, \bibinfo {author} {\bibfnamefont {H.}~\bibnamefont {Chen}}, \bibinfo {author} {\bibfnamefont {L.}~\bibnamefont {Feng}}, \bibinfo {author} {\bibfnamefont {S.-J.}\ \bibnamefont {Yang}}, \bibinfo {author} {\bibfnamefont {M.-O.}\ \bibnamefont {Renou}}, \bibinfo {author} {\bibfnamefont {D.}~\bibnamefont {Trillo}}, \bibinfo {author} {\bibfnamefont {T.~P.}\ \bibnamefont {Le}}, \bibinfo {author} {\bibfnamefont {N.}~\bibnamefont {Gisin}}, \bibinfo {author} {\bibfnamefont {A.}~\bibnamefont {Ac{\'{\i}}n}}, \bibinfo {author} {\bibfnamefont {M.}~\bibnamefont {Navascu{\'{e}}s}}, \bibinfo {author} {\bibfnamefont {Z.}~\bibnamefont {Wang}},\ and\ \bibinfo {author} {\bibfnamefont {J.}~\bibnamefont {Fan}},\ }\bibfield  {title} {\bibinfo {title} {Testing
  {R}eal {Q}uantum {T}heory in an {O}ptical {Q}uantum {N}etwork},\ }\href {https://doi.org/10.1103/physrevlett.128.040402} {\bibfield  {journal} {\bibinfo  {journal} {Phys. Rev. Lett.}\ }\textbf {\bibinfo {volume} {128}},\ \bibinfo {pages} {040402} (\bibinfo {year} {2022})}\BibitemShut {NoStop}%
\bibitem [{\citenamefont {Chen}\ \emph {et~al.}(2022)\citenamefont {Chen}, \citenamefont {Wang}, \citenamefont {Liu}, \citenamefont {Wang}, \citenamefont {Ying}, \citenamefont {Shang}, \citenamefont {Wu}, \citenamefont {Gong}, \citenamefont {Deng}, \citenamefont {Liang}, \citenamefont {Zhang}, \citenamefont {Peng}, \citenamefont {Zhu}, \citenamefont {Cabello}, \citenamefont {Lu},\ and\ \citenamefont {Pan}}]{chen2022ruling}%
  \BibitemOpen
  \bibfield  {author} {\bibinfo {author} {\bibfnamefont {M.-C.}\ \bibnamefont {Chen}}, \bibinfo {author} {\bibfnamefont {C.}~\bibnamefont {Wang}}, \bibinfo {author} {\bibfnamefont {F.-M.}\ \bibnamefont {Liu}}, \bibinfo {author} {\bibfnamefont {J.-W.}\ \bibnamefont {Wang}}, \bibinfo {author} {\bibfnamefont {C.}~\bibnamefont {Ying}}, \bibinfo {author} {\bibfnamefont {Z.-X.}\ \bibnamefont {Shang}}, \bibinfo {author} {\bibfnamefont {Y.}~\bibnamefont {Wu}}, \bibinfo {author} {\bibfnamefont {M.}~\bibnamefont {Gong}}, \bibinfo {author} {\bibfnamefont {H.}~\bibnamefont {Deng}}, \bibinfo {author} {\bibfnamefont {F.-T.}\ \bibnamefont {Liang}}, \bibinfo {author} {\bibfnamefont {Q.}~\bibnamefont {Zhang}}, \bibinfo {author} {\bibfnamefont {C.-Z.}\ \bibnamefont {Peng}}, \bibinfo {author} {\bibfnamefont {X.}~\bibnamefont {Zhu}}, \bibinfo {author} {\bibfnamefont {A.}~\bibnamefont {Cabello}}, \bibinfo {author} {\bibfnamefont {C.-Y.}\ \bibnamefont {Lu}},\ and\ \bibinfo {author} {\bibfnamefont {J.-W.}\ \bibnamefont {Pan}},\
  }\bibfield  {title} {\bibinfo {title} {Ruling {O}ut {R}eal-{V}alued {S}tandard {F}ormalism of {Q}uantum {T}heory},\ }\href {https://doi.org/10.1103/physrevlett.128.040403} {\bibfield  {journal} {\bibinfo  {journal} {Phys. Rev. Lett.}\ }\textbf {\bibinfo {volume} {128}},\ \bibinfo {pages} {040403} (\bibinfo {year} {2022})}\BibitemShut {NoStop}%
\bibitem [{\citenamefont {Wu}\ \emph {et~al.}(2022)\citenamefont {Wu}, \citenamefont {Jiang}, \citenamefont {Gu}, \citenamefont {Huang}, \citenamefont {Bai}, \citenamefont {Sun}, \citenamefont {Zhang}, \citenamefont {Gong}, \citenamefont {Mao}, \citenamefont {Zhong}, \citenamefont {Chen}, \citenamefont {Zhang}, \citenamefont {Zhang}, \citenamefont {Lu},\ and\ \citenamefont {Pan}}]{wu2022experimental}%
  \BibitemOpen
  \bibfield  {author} {\bibinfo {author} {\bibfnamefont {D.}~\bibnamefont {Wu}}, \bibinfo {author} {\bibfnamefont {Y.-F.}\ \bibnamefont {Jiang}}, \bibinfo {author} {\bibfnamefont {X.-M.}\ \bibnamefont {Gu}}, \bibinfo {author} {\bibfnamefont {L.}~\bibnamefont {Huang}}, \bibinfo {author} {\bibfnamefont {B.}~\bibnamefont {Bai}}, \bibinfo {author} {\bibfnamefont {Q.-C.}\ \bibnamefont {Sun}}, \bibinfo {author} {\bibfnamefont {X.}~\bibnamefont {Zhang}}, \bibinfo {author} {\bibfnamefont {S.-Q.}\ \bibnamefont {Gong}}, \bibinfo {author} {\bibfnamefont {Y.}~\bibnamefont {Mao}}, \bibinfo {author} {\bibfnamefont {H.-S.}\ \bibnamefont {Zhong}}, \bibinfo {author} {\bibfnamefont {M.-C.}\ \bibnamefont {Chen}}, \bibinfo {author} {\bibfnamefont {J.}~\bibnamefont {Zhang}}, \bibinfo {author} {\bibfnamefont {Q.}~\bibnamefont {Zhang}}, \bibinfo {author} {\bibfnamefont {C.-Y.}\ \bibnamefont {Lu}},\ and\ \bibinfo {author} {\bibfnamefont {J.-W.}\ \bibnamefont {Pan}},\ }\bibfield  {title} {\bibinfo {title} {Experimental {R}efutation
  of {R}eal-{V}alued {Q}uantum {M}echanics under {S}trict {L}ocality {C}onditions},\ }\href {https://doi.org/10.1103/PhysRevLett.129.140401} {\bibfield  {journal} {\bibinfo  {journal} {Phys. Rev. Lett.}\ }\textbf {\bibinfo {volume} {129}},\ \bibinfo {pages} {140401} (\bibinfo {year} {2022})}\BibitemShut {NoStop}%
\bibitem [{\citenamefont {Bednorz}\ and\ \citenamefont {Batle}(2022)}]{bednorz2022optimal}%
  \BibitemOpen
  \bibfield  {author} {\bibinfo {author} {\bibfnamefont {A.}~\bibnamefont {Bednorz}}\ and\ \bibinfo {author} {\bibfnamefont {J.}~\bibnamefont {Batle}},\ }\bibfield  {title} {\bibinfo {title} {Optimal discrimination between real and complex quantum theories},\ }\href {https://doi.org/10.1103/PhysRevA.106.042207} {\bibfield  {journal} {\bibinfo  {journal} {Phys. Rev. A}\ }\textbf {\bibinfo {volume} {106}},\ \bibinfo {pages} {042207} (\bibinfo {year} {2022})}\BibitemShut {NoStop}%
\bibitem [{\citenamefont {Yao}\ \emph {et~al.}(2024)\citenamefont {Yao}, \citenamefont {Chen}, \citenamefont {Mao}, \citenamefont {Li},\ and\ \citenamefont {Fan}}]{yao2024proposals}%
  \BibitemOpen
  \bibfield  {author} {\bibinfo {author} {\bibfnamefont {J.}~\bibnamefont {Yao}}, \bibinfo {author} {\bibfnamefont {H.}~\bibnamefont {Chen}}, \bibinfo {author} {\bibfnamefont {Y.-L.}\ \bibnamefont {Mao}}, \bibinfo {author} {\bibfnamefont {Z.-D.}\ \bibnamefont {Li}},\ and\ \bibinfo {author} {\bibfnamefont {J.}~\bibnamefont {Fan}},\ }\bibfield  {title} {\bibinfo {title} {Proposals for ruling out real quantum theories in an entanglement-swapping quantum network with causally independent sources},\ }\href {https://doi.org/10.1103/PhysRevA.109.012211} {\bibfield  {journal} {\bibinfo  {journal} {Phys. Rev. A}\ }\textbf {\bibinfo {volume} {109}},\ \bibinfo {pages} {012211} (\bibinfo {year} {2024})}\BibitemShut {NoStop}%
\bibitem [{\citenamefont {Wu}\ \emph {et~al.}(2021{\natexlab{a}})\citenamefont {Wu}, \citenamefont {Kondra}, \citenamefont {Rana}, \citenamefont {Scandolo}, \citenamefont {Xiang}, \citenamefont {Li}, \citenamefont {Guo},\ and\ \citenamefont {Streltsov}}]{wu2021operational}%
  \BibitemOpen
  \bibfield  {author} {\bibinfo {author} {\bibfnamefont {K.-D.}\ \bibnamefont {Wu}}, \bibinfo {author} {\bibfnamefont {T.~V.}\ \bibnamefont {Kondra}}, \bibinfo {author} {\bibfnamefont {S.}~\bibnamefont {Rana}}, \bibinfo {author} {\bibfnamefont {C.~M.}\ \bibnamefont {Scandolo}}, \bibinfo {author} {\bibfnamefont {G.-Y.}\ \bibnamefont {Xiang}}, \bibinfo {author} {\bibfnamefont {C.-F.}\ \bibnamefont {Li}}, \bibinfo {author} {\bibfnamefont {G.-C.}\ \bibnamefont {Guo}},\ and\ \bibinfo {author} {\bibfnamefont {A.}~\bibnamefont {Streltsov}},\ }\bibfield  {title} {\bibinfo {title} {Operational {R}esource {T}heory of {I}maginarity},\ }\href {https://doi.org/10.1103/physrevlett.126.090401} {\bibfield  {journal} {\bibinfo  {journal} {Phys. Rev. Lett.}\ }\textbf {\bibinfo {volume} {126}},\ \bibinfo {pages} {090401} (\bibinfo {year} {2021}{\natexlab{a}})}\BibitemShut {NoStop}%
\bibitem [{\citenamefont {Wu}\ \emph {et~al.}(2021{\natexlab{b}})\citenamefont {Wu}, \citenamefont {Kondra}, \citenamefont {Rana}, \citenamefont {Scandolo}, \citenamefont {Xiang}, \citenamefont {Li}, \citenamefont {Guo},\ and\ \citenamefont {Streltsov}}]{wu2021resource}%
  \BibitemOpen
  \bibfield  {author} {\bibinfo {author} {\bibfnamefont {K.-D.}\ \bibnamefont {Wu}}, \bibinfo {author} {\bibfnamefont {T.~V.}\ \bibnamefont {Kondra}}, \bibinfo {author} {\bibfnamefont {S.}~\bibnamefont {Rana}}, \bibinfo {author} {\bibfnamefont {C.~M.}\ \bibnamefont {Scandolo}}, \bibinfo {author} {\bibfnamefont {G.-Y.}\ \bibnamefont {Xiang}}, \bibinfo {author} {\bibfnamefont {C.-F.}\ \bibnamefont {Li}}, \bibinfo {author} {\bibfnamefont {G.-C.}\ \bibnamefont {Guo}},\ and\ \bibinfo {author} {\bibfnamefont {A.}~\bibnamefont {Streltsov}},\ }\bibfield  {title} {\bibinfo {title} {Resource theory of imaginarity: Quantification and state conversion},\ }\href {https://doi.org/10.1103/physreva.103.032401} {\bibfield  {journal} {\bibinfo  {journal} {Phys. Rev. A}\ }\textbf {\bibinfo {volume} {103}},\ \bibinfo {pages} {032401} (\bibinfo {year} {2021}{\natexlab{b}})}\BibitemShut {NoStop}%
\bibitem [{\citenamefont {Wu}\ \emph {et~al.}(2023)\citenamefont {Wu}, \citenamefont {Kondra}, \citenamefont {Scandolo}, \citenamefont {Rana}, \citenamefont {Xiang}, \citenamefont {Li}, \citenamefont {Guo},\ and\ \citenamefont {Streltsov}}]{wu2023resource}%
  \BibitemOpen
  \bibfield  {author} {\bibinfo {author} {\bibfnamefont {K.-D.}\ \bibnamefont {Wu}}, \bibinfo {author} {\bibfnamefont {T.~V.}\ \bibnamefont {Kondra}}, \bibinfo {author} {\bibfnamefont {C.~M.}\ \bibnamefont {Scandolo}}, \bibinfo {author} {\bibfnamefont {S.}~\bibnamefont {Rana}}, \bibinfo {author} {\bibfnamefont {G.-Y.}\ \bibnamefont {Xiang}}, \bibinfo {author} {\bibfnamefont {C.-F.}\ \bibnamefont {Li}}, \bibinfo {author} {\bibfnamefont {G.-C.}\ \bibnamefont {Guo}},\ and\ \bibinfo {author} {\bibfnamefont {A.}~\bibnamefont {Streltsov}},\ }\href {https://doi.org/https://doi.org/10.48550/arXiv.2301.04782} {\bibinfo {title} {Resource {T}heory of {I}maginarity: {N}ew {D}istributed {S}cenarios}},\ \bibinfo {howpublished} {arXiv:2301.04782 [quant-ph]} (\bibinfo {year} {2023})\BibitemShut {NoStop}%
\bibitem [{\citenamefont {Herzog}\ and\ \citenamefont {Bergou}(2002)}]{herzog2002minimum}%
  \BibitemOpen
  \bibfield  {author} {\bibinfo {author} {\bibfnamefont {U.}~\bibnamefont {Herzog}}\ and\ \bibinfo {author} {\bibfnamefont {J.~A.}\ \bibnamefont {Bergou}},\ }\bibfield  {title} {\bibinfo {title} {Minimum-error discrimination between subsets of linearly dependent quantum states},\ }\href {https://doi.org/10.1103/PhysRevA.65.050305} {\bibfield  {journal} {\bibinfo  {journal} {Phys. Rev. A}\ }\textbf {\bibinfo {volume} {65}},\ \bibinfo {pages} {050305} (\bibinfo {year} {2002})}\BibitemShut {NoStop}%
\bibitem [{\citenamefont {Zhu}(2021)}]{zhu2021hidingmasking}%
  \BibitemOpen
  \bibfield  {author} {\bibinfo {author} {\bibfnamefont {H.}~\bibnamefont {Zhu}},\ }\bibfield  {title} {\bibinfo {title} {Hiding and masking quantum information in complex and real quantum mechanics},\ }\href {https://doi.org/10.1103/PhysRevResearch.3.033176} {\bibfield  {journal} {\bibinfo  {journal} {Phys. Rev. Res.}\ }\textbf {\bibinfo {volume} {3}},\ \bibinfo {pages} {033176} (\bibinfo {year} {2021})}\BibitemShut {NoStop}%
\bibitem [{\citenamefont {Sajjan}\ \emph {et~al.}(2023)\citenamefont {Sajjan}, \citenamefont {Singh}, \citenamefont {Selvarajan},\ and\ \citenamefont {Kais}}]{sajjan2023imaginary}%
  \BibitemOpen
  \bibfield  {author} {\bibinfo {author} {\bibfnamefont {M.}~\bibnamefont {Sajjan}}, \bibinfo {author} {\bibfnamefont {V.}~\bibnamefont {Singh}}, \bibinfo {author} {\bibfnamefont {R.}~\bibnamefont {Selvarajan}},\ and\ \bibinfo {author} {\bibfnamefont {S.}~\bibnamefont {Kais}},\ }\bibfield  {title} {\bibinfo {title} {Imaginary components of out-of-time-order correlator and information scrambling for navigating the learning landscape of a quantum machine learning model},\ }\href {https://doi.org/10.1103/PhysRevResearch.5.013146} {\bibfield  {journal} {\bibinfo  {journal} {Phys. Rev. Res.}\ }\textbf {\bibinfo {volume} {5}},\ \bibinfo {pages} {013146} (\bibinfo {year} {2023})}\BibitemShut {NoStop}%
\bibitem [{\citenamefont {Haug}\ \emph {et~al.}(2023)\citenamefont {Haug}, \citenamefont {Bharti},\ and\ \citenamefont {Koh}}]{haug2023pseudorandom}%
  \BibitemOpen
  \bibfield  {author} {\bibinfo {author} {\bibfnamefont {T.}~\bibnamefont {Haug}}, \bibinfo {author} {\bibfnamefont {K.}~\bibnamefont {Bharti}},\ and\ \bibinfo {author} {\bibfnamefont {D.~E.}\ \bibnamefont {Koh}},\ }\href {https://doi.org/https://doi.org/10.48550/arXiv.2306.11677} {\bibinfo {title} {Pseudorandom unitaries are neither real nor sparse nor noise-robust}},\ \bibinfo {howpublished} {arXiv:2306.11677 [quant-ph]} (\bibinfo {year} {2023})\BibitemShut {NoStop}%
\bibitem [{\citenamefont {Carollo}\ \emph {et~al.}(2018)\citenamefont {Carollo}, \citenamefont {Spagnolo},\ and\ \citenamefont {Valenti}}]{carollo2018uhlmann}%
  \BibitemOpen
  \bibfield  {author} {\bibinfo {author} {\bibfnamefont {A.}~\bibnamefont {Carollo}}, \bibinfo {author} {\bibfnamefont {B.}~\bibnamefont {Spagnolo}},\ and\ \bibinfo {author} {\bibfnamefont {D.}~\bibnamefont {Valenti}},\ }\bibfield  {title} {\bibinfo {title} {Uhlmann curvature in dissipative phase transitions},\ }\href {https://doi.org/10.1038/s41598-018-27362-9} {\bibfield  {journal} {\bibinfo  {journal} {Scientific Reports}\ }\textbf {\bibinfo {volume} {8}},\ \bibinfo {pages} {9852} (\bibinfo {year} {2018})}\BibitemShut {NoStop}%
\bibitem [{\citenamefont {Carollo}\ \emph {et~al.}(2019)\citenamefont {Carollo}, \citenamefont {Spagnolo}, \citenamefont {Dubkov},\ and\ \citenamefont {Valenti}}]{carollo2019quantumness}%
  \BibitemOpen
  \bibfield  {author} {\bibinfo {author} {\bibfnamefont {A.}~\bibnamefont {Carollo}}, \bibinfo {author} {\bibfnamefont {B.}~\bibnamefont {Spagnolo}}, \bibinfo {author} {\bibfnamefont {A.~A.}\ \bibnamefont {Dubkov}},\ and\ \bibinfo {author} {\bibfnamefont {D.}~\bibnamefont {Valenti}},\ }\bibfield  {title} {\bibinfo {title} {On quantumness in multi-parameter quantum estimation},\ }\href {https://doi.org/10.1088/1742-5468/ab3ccb} {\bibfield  {journal} {\bibinfo  {journal} {Journal of Statistical Mechanics: Theory and Experiment}\ }\textbf {\bibinfo {volume} {2019}},\ \bibinfo {pages} {094010} (\bibinfo {year} {2019})}\BibitemShut {NoStop}%
\bibitem [{\citenamefont {Miyazaki}\ and\ \citenamefont {Matsumoto}(2022)}]{miyazaki2022imaginarityfree}%
  \BibitemOpen
  \bibfield  {author} {\bibinfo {author} {\bibfnamefont {J.}~\bibnamefont {Miyazaki}}\ and\ \bibinfo {author} {\bibfnamefont {K.}~\bibnamefont {Matsumoto}},\ }\bibfield  {title} {\bibinfo {title} {Imaginarity-free quantum multiparameter estimation},\ }\href {https://doi.org/10.22331/q-2022-03-10-665} {\bibfield  {journal} {\bibinfo  {journal} {{Quantum}}\ }\textbf {\bibinfo {volume} {6}},\ \bibinfo {pages} {665} (\bibinfo {year} {2022})}\BibitemShut {NoStop}%
\bibitem [{\citenamefont {Jones}\ \emph {et~al.}(2023)\citenamefont {Jones}, \citenamefont {Kumar}, \citenamefont {D'Aurelio}, \citenamefont {Bayerbach}, \citenamefont {Menssen},\ and\ \citenamefont {Barz}}]{jones2022distinguishability}%
  \BibitemOpen
  \bibfield  {author} {\bibinfo {author} {\bibfnamefont {A.~E.}\ \bibnamefont {Jones}}, \bibinfo {author} {\bibfnamefont {S.}~\bibnamefont {Kumar}}, \bibinfo {author} {\bibfnamefont {S.}~\bibnamefont {D'Aurelio}}, \bibinfo {author} {\bibfnamefont {M.}~\bibnamefont {Bayerbach}}, \bibinfo {author} {\bibfnamefont {A.~J.}\ \bibnamefont {Menssen}},\ and\ \bibinfo {author} {\bibfnamefont {S.}~\bibnamefont {Barz}},\ }\bibfield  {title} {\bibinfo {title} {Distinguishability and mixedness in quantum interference},\ }\href {https://doi.org/10.1103/PhysRevA.108.053701} {\bibfield  {journal} {\bibinfo  {journal} {Phys. Rev. A}\ }\textbf {\bibinfo {volume} {108}},\ \bibinfo {pages} {053701} (\bibinfo {year} {2023})}\BibitemShut {NoStop}%
\bibitem [{\citenamefont {Menssen}\ \emph {et~al.}(2017)\citenamefont {Menssen}, \citenamefont {Jones}, \citenamefont {Metcalf}, \citenamefont {Tichy}, \citenamefont {Barz}, \citenamefont {Kolthammer},\ and\ \citenamefont {Walmsley}}]{menssen2017distinguishability}%
  \BibitemOpen
  \bibfield  {author} {\bibinfo {author} {\bibfnamefont {A.~J.}\ \bibnamefont {Menssen}}, \bibinfo {author} {\bibfnamefont {A.~E.}\ \bibnamefont {Jones}}, \bibinfo {author} {\bibfnamefont {B.~J.}\ \bibnamefont {Metcalf}}, \bibinfo {author} {\bibfnamefont {M.~C.}\ \bibnamefont {Tichy}}, \bibinfo {author} {\bibfnamefont {S.}~\bibnamefont {Barz}}, \bibinfo {author} {\bibfnamefont {W.~S.}\ \bibnamefont {Kolthammer}},\ and\ \bibinfo {author} {\bibfnamefont {I.~A.}\ \bibnamefont {Walmsley}},\ }\bibfield  {title} {\bibinfo {title} {Distinguishability and many-particle interference},\ }\href {https://doi.org/10.1103/PhysRevLett.118.153603} {\bibfield  {journal} {\bibinfo  {journal} {Phys. Rev. Lett.}\ }\textbf {\bibinfo {volume} {118}},\ \bibinfo {pages} {153603} (\bibinfo {year} {2017})}\BibitemShut {NoStop}%
\bibitem [{\citenamefont {Shchesnovich}\ and\ \citenamefont {Bezerra}(2018)}]{shchesnovich2018collective}%
  \BibitemOpen
  \bibfield  {author} {\bibinfo {author} {\bibfnamefont {V.~S.}\ \bibnamefont {Shchesnovich}}\ and\ \bibinfo {author} {\bibfnamefont {M.~E.~O.}\ \bibnamefont {Bezerra}},\ }\bibfield  {title} {\bibinfo {title} {Collective phases of identical particles interfering on linear multiports},\ }\href {https://doi.org/10.1103/PhysRevA.98.033805} {\bibfield  {journal} {\bibinfo  {journal} {Phys. Rev. A}\ }\textbf {\bibinfo {volume} {98}},\ \bibinfo {pages} {033805} (\bibinfo {year} {2018})}\BibitemShut {NoStop}%
\bibitem [{\citenamefont {Wagner}\ \emph {et~al.}(2024{\natexlab{a}})\citenamefont {Wagner}, \citenamefont {Schwartzman-Nowik}, \citenamefont {Paiva}, \citenamefont {Te’eni}, \citenamefont {Ruiz-Molero}, \citenamefont {Barbosa}, \citenamefont {Cohen},\ and\ \citenamefont {Galvão}}]{wagner2023quantum}%
  \BibitemOpen
  \bibfield  {author} {\bibinfo {author} {\bibfnamefont {R.}~\bibnamefont {Wagner}}, \bibinfo {author} {\bibfnamefont {Z.}~\bibnamefont {Schwartzman-Nowik}}, \bibinfo {author} {\bibfnamefont {I.~L.}\ \bibnamefont {Paiva}}, \bibinfo {author} {\bibfnamefont {A.}~\bibnamefont {Te’eni}}, \bibinfo {author} {\bibfnamefont {A.}~\bibnamefont {Ruiz-Molero}}, \bibinfo {author} {\bibfnamefont {R.~S.}\ \bibnamefont {Barbosa}}, \bibinfo {author} {\bibfnamefont {E.}~\bibnamefont {Cohen}},\ and\ \bibinfo {author} {\bibfnamefont {E.~F.}\ \bibnamefont {Galvão}},\ }\bibfield  {title} {\bibinfo {title} {Quantum circuits for measuring weak values, {K}irkwood–{D}irac quasiprobability distributions, and state spectra},\ }\href {https://doi.org/10.1088/2058-9565/ad124c} {\bibfield  {journal} {\bibinfo  {journal} {Quantum Science and Technology}\ }\textbf {\bibinfo {volume} {9}},\ \bibinfo {pages} {015030} (\bibinfo {year} {2024}{\natexlab{a}})}\BibitemShut {NoStop}%
\bibitem [{\citenamefont {Budiyono}(2023)}]{budiyono2023operational}%
  \BibitemOpen
  \bibfield  {author} {\bibinfo {author} {\bibfnamefont {A.}~\bibnamefont {Budiyono}},\ }\bibfield  {title} {\bibinfo {title} {Operational interpretation and estimation of quantum trace-norm asymmetry based on weak-value measurement and some bounds},\ }\href {https://doi.org/10.1103/PhysRevA.108.012431} {\bibfield  {journal} {\bibinfo  {journal} {Phys. Rev. A}\ }\textbf {\bibinfo {volume} {108}},\ \bibinfo {pages} {012431} (\bibinfo {year} {2023})}\BibitemShut {NoStop}%
\bibitem [{\citenamefont {Budiyono}\ and\ \citenamefont {Dipojono}(2023)}]{budiyono2023quantifying}%
  \BibitemOpen
  \bibfield  {author} {\bibinfo {author} {\bibfnamefont {A.}~\bibnamefont {Budiyono}}\ and\ \bibinfo {author} {\bibfnamefont {H.~K.}\ \bibnamefont {Dipojono}},\ }\bibfield  {title} {\bibinfo {title} {Quantifying quantum coherence via {K}irkwood-{D}irac quasiprobability},\ }\href {https://doi.org/10.1103/PhysRevA.107.022408} {\bibfield  {journal} {\bibinfo  {journal} {Phys. Rev. A}\ }\textbf {\bibinfo {volume} {107}},\ \bibinfo {pages} {022408} (\bibinfo {year} {2023})}\BibitemShut {NoStop}%
\bibitem [{\citenamefont {Budiyono}\ \emph {et~al.}(2023)\citenamefont {Budiyono}, \citenamefont {Agusta}, \citenamefont {Nurhandoko},\ and\ \citenamefont {Dipojono}}]{budiyono2023quantum}%
  \BibitemOpen
  \bibfield  {author} {\bibinfo {author} {\bibfnamefont {A.}~\bibnamefont {Budiyono}}, \bibinfo {author} {\bibfnamefont {M.~K.}\ \bibnamefont {Agusta}}, \bibinfo {author} {\bibfnamefont {B.~E.~B.}\ \bibnamefont {Nurhandoko}},\ and\ \bibinfo {author} {\bibfnamefont {H.~K.}\ \bibnamefont {Dipojono}},\ }\bibfield  {title} {\bibinfo {title} {Quantum coherence as asymmetry from complex weak values},\ }\href {https://doi.org/10.1088/1751-8121/acd091} {\bibfield  {journal} {\bibinfo  {journal} {J. Phys. A: Math. Theor.}\ }\textbf {\bibinfo {volume} {56}},\ \bibinfo {pages} {235304} (\bibinfo {year} {2023})}\BibitemShut {NoStop}%
\bibitem [{\citenamefont {Wagner}\ and\ \citenamefont {Galv\~ao}(2023)}]{wagner2023anomalous}%
  \BibitemOpen
  \bibfield  {author} {\bibinfo {author} {\bibfnamefont {R.}~\bibnamefont {Wagner}}\ and\ \bibinfo {author} {\bibfnamefont {E.~F.}\ \bibnamefont {Galv\~ao}},\ }\bibfield  {title} {\bibinfo {title} {Simple proof that anomalous weak values require coherence},\ }\href {https://doi.org/10.1103/PhysRevA.108.L040202} {\bibfield  {journal} {\bibinfo  {journal} {Phys. Rev. A}\ }\textbf {\bibinfo {volume} {108}},\ \bibinfo {pages} {L040202} (\bibinfo {year} {2023})}\BibitemShut {NoStop}%
\bibitem [{\citenamefont {Kedem}(2012)}]{kedem2012usingtechnical}%
  \BibitemOpen
  \bibfield  {author} {\bibinfo {author} {\bibfnamefont {Y.}~\bibnamefont {Kedem}},\ }\bibfield  {title} {\bibinfo {title} {Using technical noise to increase the signal-to-noise ratio of measurements via imaginary weak values},\ }\href {https://doi.org/10.1103/physreva.85.060102} {\bibfield  {journal} {\bibinfo  {journal} {Phys. Rev. A}\ }\textbf {\bibinfo {volume} {85}},\ \bibinfo {pages} {060102} (\bibinfo {year} {2012})}\BibitemShut {NoStop}%
\bibitem [{\citenamefont {Dixon}\ \emph {et~al.}(2009)\citenamefont {Dixon}, \citenamefont {Starling}, \citenamefont {Jordan},\ and\ \citenamefont {Howell}}]{dixon2009ultrasensitive}%
  \BibitemOpen
  \bibfield  {author} {\bibinfo {author} {\bibfnamefont {P.~B.}\ \bibnamefont {Dixon}}, \bibinfo {author} {\bibfnamefont {D.~J.}\ \bibnamefont {Starling}}, \bibinfo {author} {\bibfnamefont {A.~N.}\ \bibnamefont {Jordan}},\ and\ \bibinfo {author} {\bibfnamefont {J.~C.}\ \bibnamefont {Howell}},\ }\bibfield  {title} {\bibinfo {title} {Ultrasensitive beam deflection measurement via interferometric weak value amplification},\ }\href {https://doi.org/10.1103/PhysRevLett.102.173601} {\bibfield  {journal} {\bibinfo  {journal} {Phys. Rev. Lett.}\ }\textbf {\bibinfo {volume} {102}},\ \bibinfo {pages} {173601} (\bibinfo {year} {2009})}\BibitemShut {NoStop}%
\bibitem [{\citenamefont {Hosten}\ and\ \citenamefont {Kwiat}(2008)}]{hosten2008observation}%
  \BibitemOpen
  \bibfield  {author} {\bibinfo {author} {\bibfnamefont {O.}~\bibnamefont {Hosten}}\ and\ \bibinfo {author} {\bibfnamefont {P.}~\bibnamefont {Kwiat}},\ }\bibfield  {title} {\bibinfo {title} {Observation of the {S}pin {H}all {E}ffect of {L}ight via {W}eak {M}easurements},\ }\href {https://doi.org/10.1126/science.1152697} {\bibfield  {journal} {\bibinfo  {journal} {Science}\ }\textbf {\bibinfo {volume} {319}},\ \bibinfo {pages} {787} (\bibinfo {year} {2008})}\BibitemShut {NoStop}%
\bibitem [{\citenamefont {Brunner}\ and\ \citenamefont {Simon}(2010)}]{brunner2010measuringsmall}%
  \BibitemOpen
  \bibfield  {author} {\bibinfo {author} {\bibfnamefont {N.}~\bibnamefont {Brunner}}\ and\ \bibinfo {author} {\bibfnamefont {C.}~\bibnamefont {Simon}},\ }\bibfield  {title} {\bibinfo {title} {Measuring {S}mall {L}ongitudinal {P}hase {S}hifts: {W}eak {M}easurements or {S}tandard {I}nterferometry?},\ }\href {https://doi.org/10.1103/physrevlett.105.010405} {\bibfield  {journal} {\bibinfo  {journal} {Phys. Rev. Lett.}\ }\textbf {\bibinfo {volume} {105}},\ \bibinfo {pages} {010405} (\bibinfo {year} {2010})}\BibitemShut {NoStop}%
\bibitem [{\citenamefont {Hofmann}(2011)}]{hofmann2011uncertainty}%
  \BibitemOpen
  \bibfield  {author} {\bibinfo {author} {\bibfnamefont {H.~F.}\ \bibnamefont {Hofmann}},\ }\bibfield  {title} {\bibinfo {title} {Uncertainty limits for quantum metrology obtained from the statistics of weak measurements},\ }\href {https://doi.org/10.1103/PhysRevA.83.022106} {\bibfield  {journal} {\bibinfo  {journal} {Phys. Rev. A}\ }\textbf {\bibinfo {volume} {83}},\ \bibinfo {pages} {022106} (\bibinfo {year} {2011})}\BibitemShut {NoStop}%
\bibitem [{\citenamefont {Kunjwal}\ \emph {et~al.}(2019)\citenamefont {Kunjwal}, \citenamefont {Lostaglio},\ and\ \citenamefont {Pusey}}]{kunjwal2019anomalous}%
  \BibitemOpen
  \bibfield  {author} {\bibinfo {author} {\bibfnamefont {R.}~\bibnamefont {Kunjwal}}, \bibinfo {author} {\bibfnamefont {M.}~\bibnamefont {Lostaglio}},\ and\ \bibinfo {author} {\bibfnamefont {M.~F.}\ \bibnamefont {Pusey}},\ }\bibfield  {title} {\bibinfo {title} {Anomalous weak values and contextuality: Robustness, tightness, and imaginary parts},\ }\href {https://doi.org/10.1103/PhysRevA.100.042116} {\bibfield  {journal} {\bibinfo  {journal} {Phys. Rev. A}\ }\textbf {\bibinfo {volume} {100}},\ \bibinfo {pages} {042116} (\bibinfo {year} {2019})}\BibitemShut {NoStop}%
\bibitem [{\citenamefont {Tavakoli}\ \emph {et~al.}(2023)\citenamefont {Tavakoli}, \citenamefont {Pozas-Kerstjens}, \citenamefont {Brown},\ and\ \citenamefont {Araújo}}]{tavakoli2023semidefinite}%
  \BibitemOpen
  \bibfield  {author} {\bibinfo {author} {\bibfnamefont {A.}~\bibnamefont {Tavakoli}}, \bibinfo {author} {\bibfnamefont {A.}~\bibnamefont {Pozas-Kerstjens}}, \bibinfo {author} {\bibfnamefont {P.}~\bibnamefont {Brown}},\ and\ \bibinfo {author} {\bibfnamefont {M.}~\bibnamefont {Araújo}},\ }\href {https://doi.org/https://doi.org/10.48550/arXiv.2307.02551} {\bibinfo {title} {Semidefinite programming relaxations for quantum correlations}},\ \bibinfo {howpublished} {arXiv:2307.02551 [quant-ph]} (\bibinfo {year} {2023})\BibitemShut {NoStop}%
\bibitem [{\citenamefont {Oszmaniec}\ \emph {et~al.}(2024)\citenamefont {Oszmaniec}, \citenamefont {Brod},\ and\ \citenamefont {Galvão}}]{oszmaniec2021measuring}%
  \BibitemOpen
  \bibfield  {author} {\bibinfo {author} {\bibfnamefont {M.}~\bibnamefont {Oszmaniec}}, \bibinfo {author} {\bibfnamefont {D.~J.}\ \bibnamefont {Brod}},\ and\ \bibinfo {author} {\bibfnamefont {E.~F.}\ \bibnamefont {Galvão}},\ }\bibfield  {title} {\bibinfo {title} {Measuring relational information between quantum states, and applications},\ }\href {https://doi.org/10.1088/1367-2630/ad1a27} {\bibfield  {journal} {\bibinfo  {journal} {New J. Phys.}\ }\textbf {\bibinfo {volume} {26}},\ \bibinfo {pages} {013053} (\bibinfo {year} {2024})}\BibitemShut {NoStop}%
\bibitem [{\citenamefont {Bargmann}(1964)}]{bargmann1964note}%
  \BibitemOpen
  \bibfield  {author} {\bibinfo {author} {\bibfnamefont {V.}~\bibnamefont {Bargmann}},\ }\bibfield  {title} {\bibinfo {title} {Note on {W}igner's theorem on symmetry operations},\ }\href {https://doi.org/10.1063/1.1704188} {\bibfield  {journal} {\bibinfo  {journal} {J. Math. Phys.}\ }\textbf {\bibinfo {volume} {5}},\ \bibinfo {pages} {862} (\bibinfo {year} {1964})}\BibitemShut {NoStop}%
\bibitem [{\citenamefont {Quek}\ \emph {et~al.}(2024)\citenamefont {Quek}, \citenamefont {Kaur},\ and\ \citenamefont {Wilde}}]{quek2023multivariate}%
  \BibitemOpen
  \bibfield  {author} {\bibinfo {author} {\bibfnamefont {Y.}~\bibnamefont {Quek}}, \bibinfo {author} {\bibfnamefont {E.}~\bibnamefont {Kaur}},\ and\ \bibinfo {author} {\bibfnamefont {M.~M.}\ \bibnamefont {Wilde}},\ }\bibfield  {title} {\bibinfo {title} {Multivariate trace estimation in constant quantum depth},\ }\href {https://doi.org/10.22331/q-2024-01-10-1220} {\bibfield  {journal} {\bibinfo  {journal} {{Quantum}}\ }\textbf {\bibinfo {volume} {8}},\ \bibinfo {pages} {1220} (\bibinfo {year} {2024})}\BibitemShut {NoStop}%
\bibitem [{Note1()}]{Note1}%
  \BibitemOpen
  \bibinfo {note} {Here, $U \varrho U^\dagger := (U \rho _1 U^\dagger , \protect \dots , U \rho _n U^\dagger )$, for any $\varrho = (\rho _1,\protect \dots ,\rho _n)$.}\BibitemShut {Stop}%
\bibitem [{\citenamefont {Designolle}\ \emph {et~al.}(2021)\citenamefont {Designolle}, \citenamefont {Uola}, \citenamefont {Luoma},\ and\ \citenamefont {Brunner}}]{designolle2021set}%
  \BibitemOpen
  \bibfield  {author} {\bibinfo {author} {\bibfnamefont {S.}~\bibnamefont {Designolle}}, \bibinfo {author} {\bibfnamefont {R.}~\bibnamefont {Uola}}, \bibinfo {author} {\bibfnamefont {K.}~\bibnamefont {Luoma}},\ and\ \bibinfo {author} {\bibfnamefont {N.}~\bibnamefont {Brunner}},\ }\bibfield  {title} {\bibinfo {title} {Set {C}oherence: {B}asis-{I}ndependent {Q}uantification of {Q}uantum {C}oherence},\ }\href {https://doi.org/10.1103/physrevlett.126.220404} {\bibfield  {journal} {\bibinfo  {journal} {Phys. Rev. Lett.}\ }\textbf {\bibinfo {volume} {126}},\ \bibinfo {pages} {220404} (\bibinfo {year} {2021})}\BibitemShut {NoStop}%
\bibitem [{\citenamefont {{Fraser, Thomas}}(2023)}]{fraser2023realization}%
  \BibitemOpen
  \bibfield  {author} {\bibinfo {author} {\bibnamefont {{Fraser, Thomas}}},\ }\emph {\bibinfo {title} {An estimation theoretic approach to quantum realizability problems}},\ \href {http://hdl.handle.net/10012/19970} {Ph.D. thesis},\ \bibinfo  {school} {University of Waterloo} (\bibinfo {year} {2023})\BibitemShut {NoStop}%
\bibitem [{\citenamefont {Chien}\ and\ \citenamefont {Waldron}(2016)}]{chien2016projective}%
  \BibitemOpen
  \bibfield  {author} {\bibinfo {author} {\bibfnamefont {T.-Y.}\ \bibnamefont {Chien}}\ and\ \bibinfo {author} {\bibfnamefont {S.}~\bibnamefont {Waldron}},\ }\bibfield  {title} {\bibinfo {title} {A characterization of projective unitary equivalence of finite frames and applications},\ }\href {https://doi.org/10.1137/15m1042140} {\bibfield  {journal} {\bibinfo  {journal} {{SIAM} Journal on Discrete Mathematics}\ }\textbf {\bibinfo {volume} {30}},\ \bibinfo {pages} {976} (\bibinfo {year} {2016})}\BibitemShut {NoStop}%
\bibitem [{\citenamefont {Halperin}(1962)}]{halperin1962onthe}%
  \BibitemOpen
  \bibfield  {author} {\bibinfo {author} {\bibfnamefont {I.}~\bibnamefont {Halperin}},\ }\bibfield  {title} {\bibinfo {title} {On the {G}ram {M}atrix},\ }\href {https://doi.org/10.4153/cmb-1962-027-1} {\bibfield  {journal} {\bibinfo  {journal} {Canadian Mathematical Bulletin}\ }\textbf {\bibinfo {volume} {5}},\ \bibinfo {pages} {265–280} (\bibinfo {year} {1962})}\BibitemShut {NoStop}%
\bibitem [{\citenamefont {Wagner}\ \emph {et~al.}(2024{\natexlab{b}})\citenamefont {Wagner}, \citenamefont {Barbosa},\ and\ \citenamefont {Galv\~ao}}]{wagner2022inequalities}%
  \BibitemOpen
  \bibfield  {author} {\bibinfo {author} {\bibfnamefont {R.}~\bibnamefont {Wagner}}, \bibinfo {author} {\bibfnamefont {R.~S.}\ \bibnamefont {Barbosa}},\ and\ \bibinfo {author} {\bibfnamefont {E.~F.}\ \bibnamefont {Galv\~ao}},\ }\bibfield  {title} {\bibinfo {title} {Inequalities witnessing coherence, nonlocality, and contextuality},\ }\href {https://doi.org/10.1103/PhysRevA.109.032220} {\bibfield  {journal} {\bibinfo  {journal} {Phys. Rev. A}\ }\textbf {\bibinfo {volume} {109}},\ \bibinfo {pages} {032220} (\bibinfo {year} {2024}{\natexlab{b}})}\BibitemShut {NoStop}%
\bibitem [{\citenamefont {Chefles}\ \emph {et~al.}(2004)\citenamefont {Chefles}, \citenamefont {Jozsa},\ and\ \citenamefont {Winter}}]{chefles2004existence}%
  \BibitemOpen
  \bibfield  {author} {\bibinfo {author} {\bibfnamefont {A.}~\bibnamefont {Chefles}}, \bibinfo {author} {\bibfnamefont {R.}~\bibnamefont {Jozsa}},\ and\ \bibinfo {author} {\bibfnamefont {A.}~\bibnamefont {Winter}},\ }\bibfield  {title} {\bibinfo {title} {On the existence of physical transformations between sets of quantum states},\ }\href {https://doi.org/10.1142/s0219749904000031} {\bibfield  {journal} {\bibinfo  {journal} {International Journal of Quantum Information}\ }\textbf {\bibinfo {volume} {02}},\ \bibinfo {pages} {11–21} (\bibinfo {year} {2004})}\BibitemShut {NoStop}%
\bibitem [{\citenamefont {Davis}(1979)}]{davis1979circulant}%
  \BibitemOpen
  \bibfield  {author} {\bibinfo {author} {\bibfnamefont {P.~J.}\ \bibnamefont {Davis}},\ }\href@noop {} {\emph {\bibinfo {title} {Circulant matrices}}},\ Vol.~\bibinfo {volume} {2}\ (\bibinfo  {publisher} {Wiley New York},\ \bibinfo {year} {1979})\BibitemShut {NoStop}%
\bibitem [{\citenamefont {Lostaglio}\ \emph {et~al.}(2023)\citenamefont {Lostaglio}, \citenamefont {Belenchia}, \citenamefont {Levy}, \citenamefont {Hern{\'{a}}ndez-G{\'{o}}mez}, \citenamefont {Fabbri},\ and\ \citenamefont {Gherardini}}]{lostaglio2022kirkwooddirac}%
  \BibitemOpen
  \bibfield  {author} {\bibinfo {author} {\bibfnamefont {M.}~\bibnamefont {Lostaglio}}, \bibinfo {author} {\bibfnamefont {A.}~\bibnamefont {Belenchia}}, \bibinfo {author} {\bibfnamefont {A.}~\bibnamefont {Levy}}, \bibinfo {author} {\bibfnamefont {S.}~\bibnamefont {Hern{\'{a}}ndez-G{\'{o}}mez}}, \bibinfo {author} {\bibfnamefont {N.}~\bibnamefont {Fabbri}},\ and\ \bibinfo {author} {\bibfnamefont {S.}~\bibnamefont {Gherardini}},\ }\bibfield  {title} {\bibinfo {title} {Kirkwood-{D}irac quasiprobability approach to the statistics of incompatible observables},\ }\href {https://doi.org/10.22331/q-2023-10-09-1128} {\bibfield  {journal} {\bibinfo  {journal} {{Quantum}}\ }\textbf {\bibinfo {volume} {7}},\ \bibinfo {pages} {1128} (\bibinfo {year} {2023})}\BibitemShut {NoStop}%
\bibitem [{\citenamefont {Yunger~Halpern}\ \emph {et~al.}(2018)\citenamefont {Yunger~Halpern}, \citenamefont {Swingle},\ and\ \citenamefont {Dressel}}]{yungerhalpern2018quasi}%
  \BibitemOpen
  \bibfield  {author} {\bibinfo {author} {\bibfnamefont {N.}~\bibnamefont {Yunger~Halpern}}, \bibinfo {author} {\bibfnamefont {B.}~\bibnamefont {Swingle}},\ and\ \bibinfo {author} {\bibfnamefont {J.}~\bibnamefont {Dressel}},\ }\bibfield  {title} {\bibinfo {title} {Quasiprobability behind the out-of-time-ordered correlator},\ }\href {https://doi.org/10.1103/PhysRevA.97.042105} {\bibfield  {journal} {\bibinfo  {journal} {Phys. Rev. A}\ }\textbf {\bibinfo {volume} {97}},\ \bibinfo {pages} {042105} (\bibinfo {year} {2018})}\BibitemShut {NoStop}%
\bibitem [{\citenamefont {Arvidsson-Shukur}\ \emph {et~al.}(2021)\citenamefont {Arvidsson-Shukur}, \citenamefont {Drori},\ and\ \citenamefont {Halpern}}]{arvidssonshukur2021conditions}%
  \BibitemOpen
  \bibfield  {author} {\bibinfo {author} {\bibfnamefont {D.~R.~M.}\ \bibnamefont {Arvidsson-Shukur}}, \bibinfo {author} {\bibfnamefont {J.~C.}\ \bibnamefont {Drori}},\ and\ \bibinfo {author} {\bibfnamefont {N.~Y.}\ \bibnamefont {Halpern}},\ }\bibfield  {title} {\bibinfo {title} {Conditions tighter than noncommutation needed for nonclassicality},\ }\href {https://doi.org/10.1088/1751-8121/ac0289} {\bibfield  {journal} {\bibinfo  {journal} {J. Phys. A: Math. Theor.}\ }\textbf {\bibinfo {volume} {54}},\ \bibinfo {pages} {284001} (\bibinfo {year} {2021})}\BibitemShut {NoStop}%
\bibitem [{\citenamefont {Arvidsson-Shukur}\ \emph {et~al.}(2020)\citenamefont {Arvidsson-Shukur}, \citenamefont {Halpern}, \citenamefont {Lepage}, \citenamefont {Lasek}, \citenamefont {Barnes},\ and\ \citenamefont {Lloyd}}]{arvidssonshukur2020quantumadvantage}%
  \BibitemOpen
  \bibfield  {author} {\bibinfo {author} {\bibfnamefont {D.~R.~M.}\ \bibnamefont {Arvidsson-Shukur}}, \bibinfo {author} {\bibfnamefont {N.~Y.}\ \bibnamefont {Halpern}}, \bibinfo {author} {\bibfnamefont {H.~V.}\ \bibnamefont {Lepage}}, \bibinfo {author} {\bibfnamefont {A.~A.}\ \bibnamefont {Lasek}}, \bibinfo {author} {\bibfnamefont {C.~H.~W.}\ \bibnamefont {Barnes}},\ and\ \bibinfo {author} {\bibfnamefont {S.}~\bibnamefont {Lloyd}},\ }\bibfield  {title} {\bibinfo {title} {Quantum advantage in postselected metrology},\ }\href {https://doi.org/10.1038/s41467-020-17559-w} {\bibfield  {journal} {\bibinfo  {journal} {Nature Communications}\ }\textbf {\bibinfo {volume} {11}},\ \bibinfo {pages} {3775} (\bibinfo {year} {2020})}\BibitemShut {NoStop}%
\bibitem [{\citenamefont {Aharonov}\ \emph {et~al.}(1988)\citenamefont {Aharonov}, \citenamefont {Albert},\ and\ \citenamefont {Vaidman}}]{aharonov1988how}%
  \BibitemOpen
  \bibfield  {author} {\bibinfo {author} {\bibfnamefont {Y.}~\bibnamefont {Aharonov}}, \bibinfo {author} {\bibfnamefont {D.~Z.}\ \bibnamefont {Albert}},\ and\ \bibinfo {author} {\bibfnamefont {L.}~\bibnamefont {Vaidman}},\ }\bibfield  {title} {\bibinfo {title} {How the result of a measurement of a component of the spin of a spin-1/2 particle can turn out to be 100},\ }\href {https://doi.org/10.1103/PhysRevLett.60.1351} {\bibfield  {journal} {\bibinfo  {journal} {Phys. Rev. Lett.}\ }\textbf {\bibinfo {volume} {60}},\ \bibinfo {pages} {1351} (\bibinfo {year} {1988})}\BibitemShut {NoStop}%
\bibitem [{\citenamefont {Pont}\ \emph {et~al.}(2022)\citenamefont {Pont}, \citenamefont {Albiero}, \citenamefont {Thomas}, \citenamefont {Spagnolo}, \citenamefont {Ceccarelli}, \citenamefont {Corrielli}, \citenamefont {Brieussel}, \citenamefont {Somaschi}, \citenamefont {Huet}, \citenamefont {Harouri}, \citenamefont {Lema\^{\i}tre}, \citenamefont {Sagnes}, \citenamefont {Belabas}, \citenamefont {Sciarrino}, \citenamefont {Osellame}, \citenamefont {Senellart},\ and\ \citenamefont {Crespi}}]{pont2022quantifying}%
  \BibitemOpen
  \bibfield  {author} {\bibinfo {author} {\bibfnamefont {M.}~\bibnamefont {Pont}}, \bibinfo {author} {\bibfnamefont {R.}~\bibnamefont {Albiero}}, \bibinfo {author} {\bibfnamefont {S.~E.}\ \bibnamefont {Thomas}}, \bibinfo {author} {\bibfnamefont {N.}~\bibnamefont {Spagnolo}}, \bibinfo {author} {\bibfnamefont {F.}~\bibnamefont {Ceccarelli}}, \bibinfo {author} {\bibfnamefont {G.}~\bibnamefont {Corrielli}}, \bibinfo {author} {\bibfnamefont {A.}~\bibnamefont {Brieussel}}, \bibinfo {author} {\bibfnamefont {N.}~\bibnamefont {Somaschi}}, \bibinfo {author} {\bibfnamefont {H.}~\bibnamefont {Huet}}, \bibinfo {author} {\bibfnamefont {A.}~\bibnamefont {Harouri}}, \bibinfo {author} {\bibfnamefont {A.}~\bibnamefont {Lema\^{\i}tre}}, \bibinfo {author} {\bibfnamefont {I.}~\bibnamefont {Sagnes}}, \bibinfo {author} {\bibfnamefont {N.}~\bibnamefont {Belabas}}, \bibinfo {author} {\bibfnamefont {F.}~\bibnamefont {Sciarrino}}, \bibinfo {author} {\bibfnamefont {R.}~\bibnamefont {Osellame}}, \bibinfo {author} {\bibfnamefont
  {P.}~\bibnamefont {Senellart}},\ and\ \bibinfo {author} {\bibfnamefont {A.}~\bibnamefont {Crespi}},\ }\bibfield  {title} {\bibinfo {title} {Quantifying $n$-photon indistinguishability with a cyclic integrated interferometer},\ }\href {https://doi.org/10.1103/PhysRevX.12.031033} {\bibfield  {journal} {\bibinfo  {journal} {Phys. Rev. X}\ }\textbf {\bibinfo {volume} {12}},\ \bibinfo {pages} {031033} (\bibinfo {year} {2022})}\BibitemShut {NoStop}%
\bibitem [{\citenamefont {Seron}\ \emph {et~al.}(2023)\citenamefont {Seron}, \citenamefont {Novo},\ and\ \citenamefont {Cerf}}]{seron2023boson}%
  \BibitemOpen
  \bibfield  {author} {\bibinfo {author} {\bibfnamefont {B.}~\bibnamefont {Seron}}, \bibinfo {author} {\bibfnamefont {L.}~\bibnamefont {Novo}},\ and\ \bibinfo {author} {\bibfnamefont {N.~J.}\ \bibnamefont {Cerf}},\ }\bibfield  {title} {\bibinfo {title} {Boson bunching is not maximized by indistinguishable particles},\ }\href {https://doi.org/10.1038/s41566-023-01213-0} {\bibfield  {journal} {\bibinfo  {journal} {Nature Photonics}\ }\textbf {\bibinfo {volume} {17}},\ \bibinfo {pages} {702–709} (\bibinfo {year} {2023})}\BibitemShut {NoStop}%
\bibitem [{\citenamefont {Pancharatnam}(1956)}]{Pancharatnam56}%
  \BibitemOpen
  \bibfield  {author} {\bibinfo {author} {\bibfnamefont {S.}~\bibnamefont {Pancharatnam}},\ }\bibfield  {title} {\bibinfo {title} {Generalized theory of interference, and its applications},\ }\href {https://doi.org/10.1007/BF03046050} {\bibfield  {journal} {\bibinfo  {journal} {Proceedings of the Indian Academy of Sciences - Section A}\ }\textbf {\bibinfo {volume} {44}},\ \bibinfo {pages} {247} (\bibinfo {year} {1956})}\BibitemShut {NoStop}%
\bibitem [{\citenamefont {Berry}\ and\ \citenamefont {Klein}(1996)}]{Berry09}%
  \BibitemOpen
  \bibfield  {author} {\bibinfo {author} {\bibfnamefont {M.~V.}\ \bibnamefont {Berry}}\ and\ \bibinfo {author} {\bibfnamefont {S.}~\bibnamefont {Klein}},\ }\bibfield  {title} {\bibinfo {title} {Geometric phases from stacks of crystal plates},\ }\href {https://doi.org/10.1080/09500349608232731} {\bibfield  {journal} {\bibinfo  {journal} {Journal of Modern Optics}\ }\textbf {\bibinfo {volume} {43}},\ \bibinfo {pages} {165} (\bibinfo {year} {1996})}\BibitemShut {NoStop}%
\bibitem [{\citenamefont {Dariusz~Chruscinski}(2004)}]{Chruscinski04}%
  \BibitemOpen
  \bibfield  {author} {\bibinfo {author} {\bibfnamefont {A.~J.}\ \bibnamefont {Dariusz~Chruscinski}},\ }\href {http://gen.lib.rus.ec/book/index.php?md5=00285203b16df9e982f0cbd96a5c84ce} {\emph {\bibinfo {title} {Geometric phases in classical and quantum mechanics}}},\ \bibinfo {edition} {1st}\ ed.,\ Progress in Mathematical Physics\ (\bibinfo  {publisher} {Birkhäuser Boston},\ \bibinfo {year} {2004})\BibitemShut {NoStop}%
\bibitem [{\citenamefont {Galv{\~{a}}o}\ and\ \citenamefont {Brod}(2020)}]{galvaobrod2020quantum}%
  \BibitemOpen
  \bibfield  {author} {\bibinfo {author} {\bibfnamefont {E.~F.}\ \bibnamefont {Galv{\~{a}}o}}\ and\ \bibinfo {author} {\bibfnamefont {D.~J.}\ \bibnamefont {Brod}},\ }\bibfield  {title} {\bibinfo {title} {Quantum and classical bounds for two-state overlaps},\ }\href {https://doi.org/10.1103/physreva.101.062110} {\bibfield  {journal} {\bibinfo  {journal} {Phys. Rev. A}\ }\textbf {\bibinfo {volume} {101}},\ \bibinfo {pages} {062110} (\bibinfo {year} {2020})}\BibitemShut {NoStop}%
\bibitem [{\citenamefont {Wagner}\ \emph {et~al.}(2024{\natexlab{c}})\citenamefont {Wagner}, \citenamefont {Camillini},\ and\ \citenamefont {Galv{\~{a}}o}}]{wagner2024coherence}%
  \BibitemOpen
  \bibfield  {author} {\bibinfo {author} {\bibfnamefont {R.}~\bibnamefont {Wagner}}, \bibinfo {author} {\bibfnamefont {A.}~\bibnamefont {Camillini}},\ and\ \bibinfo {author} {\bibfnamefont {E.~F.}\ \bibnamefont {Galv{\~{a}}o}},\ }\bibfield  {title} {\bibinfo {title} {Coherence and contextuality in a {M}ach-{Z}ehnder interferometer},\ }\href {https://doi.org/10.22331/q-2024-02-05-1240} {\bibfield  {journal} {\bibinfo  {journal} {{Quantum}}\ }\textbf {\bibinfo {volume} {8}},\ \bibinfo {pages} {1240} (\bibinfo {year} {2024}{\natexlab{c}})}\BibitemShut {NoStop}%
\bibitem [{\citenamefont {Brod}\ \emph {et~al.}(2019)\citenamefont {Brod}, \citenamefont {Galv\~ao}, \citenamefont {Viggianiello}, \citenamefont {Flamini}, \citenamefont {Spagnolo},\ and\ \citenamefont {Sciarrino}}]{brod2019witnessing}%
  \BibitemOpen
  \bibfield  {author} {\bibinfo {author} {\bibfnamefont {D.~J.}\ \bibnamefont {Brod}}, \bibinfo {author} {\bibfnamefont {E.~F.}\ \bibnamefont {Galv\~ao}}, \bibinfo {author} {\bibfnamefont {N.}~\bibnamefont {Viggianiello}}, \bibinfo {author} {\bibfnamefont {F.}~\bibnamefont {Flamini}}, \bibinfo {author} {\bibfnamefont {N.}~\bibnamefont {Spagnolo}},\ and\ \bibinfo {author} {\bibfnamefont {F.}~\bibnamefont {Sciarrino}},\ }\bibfield  {title} {\bibinfo {title} {Witnessing genuine multiphoton indistinguishability},\ }\href {https://doi.org/10.1103/PhysRevLett.122.063602} {\bibfield  {journal} {\bibinfo  {journal} {Phys. Rev. Lett.}\ }\textbf {\bibinfo {volume} {122}},\ \bibinfo {pages} {063602} (\bibinfo {year} {2019})}\BibitemShut {NoStop}%
\bibitem [{\citenamefont {Giordani}\ \emph {et~al.}(2020)\citenamefont {Giordani}, \citenamefont {Brod}, \citenamefont {Esposito}, \citenamefont {Viggianiello}, \citenamefont {Romano}, \citenamefont {Flamini}, \citenamefont {Carvacho}, \citenamefont {Spagnolo}, \citenamefont {Galv{\~{a}}o},\ and\ \citenamefont {Sciarrino}}]{giordani2020experimental}%
  \BibitemOpen
  \bibfield  {author} {\bibinfo {author} {\bibfnamefont {T.}~\bibnamefont {Giordani}}, \bibinfo {author} {\bibfnamefont {D.~J.}\ \bibnamefont {Brod}}, \bibinfo {author} {\bibfnamefont {C.}~\bibnamefont {Esposito}}, \bibinfo {author} {\bibfnamefont {N.}~\bibnamefont {Viggianiello}}, \bibinfo {author} {\bibfnamefont {M.}~\bibnamefont {Romano}}, \bibinfo {author} {\bibfnamefont {F.}~\bibnamefont {Flamini}}, \bibinfo {author} {\bibfnamefont {G.}~\bibnamefont {Carvacho}}, \bibinfo {author} {\bibfnamefont {N.}~\bibnamefont {Spagnolo}}, \bibinfo {author} {\bibfnamefont {E.~F.}\ \bibnamefont {Galv{\~{a}}o}},\ and\ \bibinfo {author} {\bibfnamefont {F.}~\bibnamefont {Sciarrino}},\ }\bibfield  {title} {\bibinfo {title} {Experimental quantification of four-photon indistinguishability},\ }\href {https://doi.org/10.1088/1367-2630/ab7a30} {\bibfield  {journal} {\bibinfo  {journal} {New J. Phys.}\ }\textbf {\bibinfo {volume} {22}},\ \bibinfo {pages} {043001} (\bibinfo {year} {2020})}\BibitemShut {NoStop}%
\bibitem [{\citenamefont {Giordani}\ \emph {et~al.}(2021)\citenamefont {Giordani}, \citenamefont {Esposito}, \citenamefont {Hoch}, \citenamefont {Carvacho}, \citenamefont {Brod}, \citenamefont {Galv\~ao}, \citenamefont {Spagnolo},\ and\ \citenamefont {Sciarrino}}]{giordani2021witnesses}%
  \BibitemOpen
  \bibfield  {author} {\bibinfo {author} {\bibfnamefont {T.}~\bibnamefont {Giordani}}, \bibinfo {author} {\bibfnamefont {C.}~\bibnamefont {Esposito}}, \bibinfo {author} {\bibfnamefont {F.}~\bibnamefont {Hoch}}, \bibinfo {author} {\bibfnamefont {G.}~\bibnamefont {Carvacho}}, \bibinfo {author} {\bibfnamefont {D.~J.}\ \bibnamefont {Brod}}, \bibinfo {author} {\bibfnamefont {E.~F.}\ \bibnamefont {Galv\~ao}}, \bibinfo {author} {\bibfnamefont {N.}~\bibnamefont {Spagnolo}},\ and\ \bibinfo {author} {\bibfnamefont {F.}~\bibnamefont {Sciarrino}},\ }\bibfield  {title} {\bibinfo {title} {Witnesses of coherence and dimension from multiphoton indistinguishability tests},\ }\href {https://doi.org/10.1103/PhysRevResearch.3.023031} {\bibfield  {journal} {\bibinfo  {journal} {Phys. Rev. Res.}\ }\textbf {\bibinfo {volume} {3}},\ \bibinfo {pages} {023031} (\bibinfo {year} {2021})}\BibitemShut {NoStop}%
\bibitem [{\citenamefont {Giordani}\ \emph {et~al.}(2023)\citenamefont {Giordani}, \citenamefont {Wagner}, \citenamefont {Esposito}, \citenamefont {Camillini}, \citenamefont {Hoch}, \citenamefont {Carvacho}, \citenamefont {Pentangelo}, \citenamefont {Ceccarelli}, \citenamefont {Piacentini}, \citenamefont {Crespi}, \citenamefont {Spagnolo}, \citenamefont {Osellame}, \citenamefont {Galvão},\ and\ \citenamefont {Sciarrino}}]{giordani2023experimental}%
  \BibitemOpen
  \bibfield  {author} {\bibinfo {author} {\bibfnamefont {T.}~\bibnamefont {Giordani}}, \bibinfo {author} {\bibfnamefont {R.}~\bibnamefont {Wagner}}, \bibinfo {author} {\bibfnamefont {C.}~\bibnamefont {Esposito}}, \bibinfo {author} {\bibfnamefont {A.}~\bibnamefont {Camillini}}, \bibinfo {author} {\bibfnamefont {F.}~\bibnamefont {Hoch}}, \bibinfo {author} {\bibfnamefont {G.}~\bibnamefont {Carvacho}}, \bibinfo {author} {\bibfnamefont {C.}~\bibnamefont {Pentangelo}}, \bibinfo {author} {\bibfnamefont {F.}~\bibnamefont {Ceccarelli}}, \bibinfo {author} {\bibfnamefont {S.}~\bibnamefont {Piacentini}}, \bibinfo {author} {\bibfnamefont {A.}~\bibnamefont {Crespi}}, \bibinfo {author} {\bibfnamefont {N.}~\bibnamefont {Spagnolo}}, \bibinfo {author} {\bibfnamefont {R.}~\bibnamefont {Osellame}}, \bibinfo {author} {\bibfnamefont {E.~F.}\ \bibnamefont {Galvão}},\ and\ \bibinfo {author} {\bibfnamefont {F.}~\bibnamefont {Sciarrino}},\ }\bibfield  {title} {\bibinfo {title} {Experimental certification of contextuality, coherence,
  and dimension in a programmable universal photonic processor},\ }\href {https://doi.org/10.1126/sciadv.adj4249} {\bibfield  {journal} {\bibinfo  {journal} {Science Advances}\ }\textbf {\bibinfo {volume} {9}},\ \bibinfo {pages} {eadj4249} (\bibinfo {year} {2023})}\BibitemShut {NoStop}%
\end{thebibliography}%

\appendix
\section{Proof of Theorems~\ref{theorem: gota} and~\ref{theorem: circulant gota}}\label{ap:proofs_th2and3}

\begin{proof}[Proof of theorem~\ref{theorem: gota}]
    Let us consider $H(\pmb\Delta,\phi)$ from Eq.~\eqref{eq: Bargmann matrix} be arbitrary. From Theorem~\ref{theorem: Gram matrix realization} we know that some tuple of states reaching the values of all the invariants defining $H(\pmb\Delta,\phi)$, where $\pmb\Delta = (\Delta_{12},\Delta_{13},\Delta_{23})$,  exists if, and only if, this is a positive-semidefinite matrix. As some of these invariants correspond to third-order invariants, their quantum realizability is determined by the positive-semidefiniteness of $H(\pmb\Delta,\phi)$. Recall that we call $H(\pmb\Delta,\phi)$ quantum realizable iff $H(\pmb\Delta,\phi) = G_\Psi$ for some finite tuple of kets $\Psi$.

    According to Sylvester's criterion, a Hermitian matrix is positive-semidefinite if, and only if, all the determinants of its principal minors are non-negative. This implies in this case the conditions,
    \begin{equation}\label{eq: constraint 1}
        \Delta_{12},\Delta_{13},\Delta_{23} \leq 1
    \end{equation}
    which is always satisfied, and the non-trivial constraint,
    \begin{equation}\label{eq: constraint 2}
        1-\Delta_{12}-\Delta_{13}-\Delta_{23}+2\text{Re}[\Delta_{123}]\geq 0.
    \end{equation}
    Note that $\text{Re}[\Delta_{123}] = \sqrt{\Delta_{12}\Delta_{13}\Delta_{23}}\cos(\phi_{123})$ since $\Delta_{123} = \sqrt{\Delta_{12}\Delta_{13}\Delta_{23}} e^{i\phi_{123}} = |\Delta_{123}|e^{i\phi_{123}}$. Recall that, for any given $\Psi \in \mathcal{P}(\mathcal{H})^3$ quantum realizations of $\Delta_{123}$ are given by 
    \begin{equation*}
        \Delta_{123}(\Psi) = \langle \psi_1|\psi_2 \rangle \langle \psi_2|\psi_3 \rangle \langle \psi_3 | \psi_1\rangle.
    \end{equation*}

    The idea is to maximize the absolute value $|\Delta_{123}|$, but keeping in mind that this value is not independent of the phase $\phi_{123}$. Therefore we maximize $|\Delta_{123}|$ under the constraints given in Eqs.~\eqref{eq: constraint 1} and~\eqref{eq: constraint 2}. There are many such possible choices, for more information we refer the reader to Refs.~\cite{oszmaniec2021measuring,chien2016projective}. Importantly, our results are never dependent of this gauge choice. Each such choice will permute the phase choices inside the matrix $H(\pmb{\Delta},\phi)$. We start noticing that the gradient of $|\Delta_{123}|^2$
    \begin{equation*}
        \frac{\partial |\Delta_{123}|^2}{\partial \Delta_{ij}} = \frac{\Delta_{12}\Delta_{13}\Delta_{23}}{\Delta_{ij}}
    \end{equation*}
    is never zero in the region $0< \Delta_{ij}< 1$, for all $\{i,j\} \in C_3$. Therefore, the maxima must be the points in the boundaries of this region. If any $\Delta_{ij}=0$ then $|\Delta_{123}|^2$ is also zero. If all overlaps are equal to $1$ we have that Eq.~\eqref{eq: constraint 2} becomes the condition $-2+2\cos(\phi_{123})\geq 0$. This is only satisfied for $\phi_{123} = 0$. For finding the maxima that will return non-trivial bounds, we use the method of Lagrange multipliers ($\Delta \equiv \Delta_{123}, \phi \equiv \phi_{123}$),
    \begin{align*}
        &\frac{\partial }{\partial \Delta_{ij}}\Bigr(|\Delta|^2-\lambda \Bigr(1-\sum_{\{i,j\} \in C_3}\Delta_{ij}+2|\Delta|\cos(\phi)\Bigr)\Bigr) = 0
    \end{align*}
    and therefore, 
    \begin{equation}
        |\Delta_{123}|^2+\lambda(\Delta_{ij}-|\Delta_{123}|\cos(\phi_{123})) = 0.
    \end{equation}

    We conclude that the optimal values are achieved by third-order invariants where all two-state overlaps have the same value $\Delta_{12} = \Delta_{13} = \Delta_{23}$. This implies that Eq.~\eqref{eq: constraint 2} becomes, 
    \begin{equation}
        1-3|\Delta_{123}|^{\frac{2}{3}} + 2|\Delta_{123}|\cos(\phi_{123}) \geq 0
    \end{equation}
    with the boundary given by equality, as we wanted to prove.
\end{proof}

{We can draw two observations from the previous proof. First, the condition that defines the border of the set of physically realizable 3rd order Bargmann invariants is $\det(G_{\Psi})=0$. This implies that the corresponding Gram matrix is not full rank and so there exists a physical physical realization of these matrices using only qubit states. Second, it can be seen that the Gram matrices of dimension 3 that maximize $|\Delta_{123}|$ can be written as circulant matrices, via an appropriate gauge choice. If a Gram matrix $G_{\Psi}$ of the form given in Eq.~\eqref{eq: Bargmann matrix} is such that $\Delta_{12} = \Delta_{13} = \Delta_{23}= |\alpha|$ we can do the gauge transformation $\ket{\psi'_1}= \ket{\psi_1} $, $\ket{\psi'_2}= e^{i \phi_{123}/3}\ket{\psi_2}$, $\ket{\psi'_3}= e^{-i \phi_{123}/3}\ket{\psi_3}$, so that the Gram matrix $G'_{ij}= \langle \psi'_i|\psi'_j \rangle$ is circulant. This realization will help us tackle the problem of characterizing Bargmann invariants of 4th order. By focusing on circulant Gram matrices,  the constraint coming from the fact that Gram matrices are positive semidefinite is easier to analyse, since eigenvalues of circulant matrices have a simple analytical form in terms of the matrix entries. }

We say that a generic $n \times n$ matrix is circulant if every row is the cyclic permutation of the previous row. Let us define the set of all $n \times n$ circulant matrices as $\mathsf{circ}(n)$, and the set of all $n \times n$ circulant matrices that are also correlation Gram matrices as $\mathsf{gramcirc}(n)$. Let us also define,
\begin{equation}    \mathcal{B}_n|_{\mathsf{circ}} := \{\Delta \in \mathcal{B}_n \,|\, \Delta = \Delta_\Psi \Leftrightarrow G_{\Psi} \in \mathsf{gramcirc}(n)\}.
\end{equation}

We have that $\Delta \in \mathcal{B}_4 |_{\mathsf{circ}}$ if, and only if, the Bargmann invariant representation of circulant Gram matrices $H(\alpha,\delta)$ is quantum realizable. We write  $H(\alpha,\delta)$ as
\begin{equation}
    H(\alpha,\delta)=\begin{bmatrix}1 & \alpha & \sqrt{\delta} & \alpha^{*}\\
\alpha^{*} & 1 & \alpha & \sqrt{\delta}\\
\sqrt{\delta} & \alpha^{*} & 1 & \alpha\\
\alpha & \sqrt{\delta} & \alpha^{*} & 1
\end{bmatrix}
\end{equation}
where $\Delta = \alpha^4 \in \mathcal{B}_4 |_{\mathsf{circ}}$ and $\delta \in \mathcal{B}_2$. Therefore, any quantum realization $\underline\psi \in \mathcal{P}(\mathcal{H})^4$ for this matrix must satisfy that $\langle \psi_1|\psi_2 \rangle = \langle \psi_2|\psi_3 \rangle = \langle \psi_3|\psi_4 \rangle = \langle \psi_4|\psi_1 \rangle = \alpha \in \mathbb{C}$ and $\langle \psi_1|\psi_3 \rangle = \langle \psi_2|\psi_4 \rangle = \sqrt{\delta} \in \mathbb{R}$.  

$\Delta$ is quantum realizable if, and only if (from theorem~\ref{theorem: Gram matrix realization}) $H(\Delta,\delta)$  is positive-semidefinite. We use this to bound the possible values of invariants satisfying this condition.

\begin{proof}[Proof of theorem~\ref{theorem: circulant gota}]
    For circulant Gram matrices, the forth-order invariant $\Delta_{1234}$ can be described as

\[
\Delta_{1234}=\alpha^{4}.
\]
Since $H(\alpha,\delta)$ is circulant, its eigenvalues $\lambda$ are given by

\[
\lambda=1+\sqrt{\delta}\pm2\mathrm{Re}\left(\alpha\right),\,1-\sqrt{\delta}\pm2\mathrm{Im}\left(\alpha\right)
\]
Positive semidefiniteness demands that all eigenvalues be non-negative, which is equivalent to demanding

\[
\begin{array}{cc}
2\left|\mathrm{Re}\left(\alpha\right)\right| & \leq 1+\sqrt{\delta}\\
2\left|\mathrm{Im}\left(\alpha\right)\right| & \leq 1-\sqrt{\delta}
\end{array}\Leftrightarrow\begin{array}{cc}
|\alpha| & \leq\frac{1+\sqrt{\delta}}{2\left|\cos\theta\right|}\\
|\alpha| & \leq\frac{1-\sqrt{\delta}}{2\left|\sin\theta\right|},
\end{array}
\]
where $\alpha = |\alpha|e^{i\theta}$. Maximizing $|\alpha|$ subject to the above restrictions gives

\[
\left|\alpha \right|=\min\left(\frac{1+\sqrt{\delta}}{2\left|\cos\theta\right|},\frac{1-\sqrt{\delta}}{2\left|\sin\theta\right|}\right).
\]
This minimization implies,

\[
\left|\alpha \right|=\frac{1+\sqrt{\delta}}{2\left|\cos\theta\right|}=\frac{1-\sqrt{\delta}}{2\left|\sin\theta\right|}\Rightarrow \sqrt{\delta}=\frac{\left|\cos\theta\right|-\left|\sin\theta\right|}{\left|\sin\theta\right|+\left|\cos\theta\right|},
\]
and hence 

\[
\alpha=\frac{e^{i\theta}}{\left|\sin\theta\right|+\left|\cos\theta\right|}.
\]

Because $\alpha^4 =(|\alpha|e^{i\theta})^4=\Delta_{1234} = |\Delta|e^{i\phi}$, substituting $4\theta=\phi$,
we get the boundary of $\mathcal{B}_4|_{\mathsf{circ}}$ is described by the complex curve

\[
\Delta=\frac{e^{i\phi}}{\left(\sin\frac{\phi}{4}+\cos\frac{\phi}{4}\right)^{4}}
\]
with $\phi \in [0,2\pi)$. 
\end{proof}

To conclude, we also show that the boundary of the set $\mathcal{B}_{4}\vert_{\mathsf{circ}}$ is attainable by tuples of vectors in $\mathbb{C}^2$. We can let four-tuples of single-qubit states to have the following form:

\begin{equation}
\left|\psi_{k}\right\rangle =\sin\theta\left|0\right\rangle +\omega^{k}\cos\theta\left|1\right\rangle, 
\end{equation}
where $k=0,1,2,3$. For this tuple $\Psi = (\vert \psi_k \rangle \langle \psi_k \vert )_k$, the Gram matrix $G_\Psi$ is always circulant, and therefore the fourth-order Bargmann invariant $\Delta_4 = \langle \psi_0\vert \psi_1\rangle \dots \langle \psi_3\vert \psi_0\rangle \in \mathcal{B}_4|_{\mathsf{circ}}$.  

This Gram matrix $G_\Psi$ will have elements,

\begin{equation}
a=\left\langle \psi_{k}\left.\right|\psi_{k+1}\right\rangle =\sin^{2}\theta+\omega\cos^{2}\theta.
\end{equation}
where $\omega$ is an $4$-th root of unity. The only option for this root that can give a non-zero imaginary part for the Bargmann invariant is $\omega=i$, or equivalently $\omega=-i$ which gives a complex conjugate Bargmann invariant.

The invariant $\Delta_4=r e^{i\phi}$ is then of the form
\begin{equation}
\Delta_4=\left(\sin^{2}\theta+i\cos^{2}\theta\right)^{4},\label{eq:Invariant}
\end{equation}
with its radius $r = |\Delta|$ given by
\begin{equation}
r=\left(\sin^{4}\theta+\cos^{4}\theta\right)^{2}
\end{equation}
and its phase $\phi$ satisfying
\begin{equation}
\tan\frac{\phi}{4}=\frac{\cos^{2}\theta}{\sin^{2}\theta}=\frac{1}{\sin^{2}\theta}-1.
\end{equation}
Putting the last two equations together we have 

\begin{align}
r & =\left(\frac{1}{\left(1+\tan\frac{\phi}{4}\right)^{2}}+\left(\frac{\tan\frac{\phi}{4}}{1+\tan\frac{\phi}{4}}\right)^{2}\right)^{2}\\
&=\left(\frac{1+\tan^{2}\frac{\phi}{4}}{\left(1+\tan\frac{\phi}{4}\right)^{2}}\right)^{2}\nonumber \\
 & =\left(\frac{1}{\left(1+\tan\frac{\phi}{4}\right)^{2}\cos^{2}\frac{\phi}{4}}\right)^{2}\\
 &=\frac{1}{\left(\cos\frac{\phi}{4}+\sin\frac{\phi}{4}\right)^{4}}.\label{eq:drop equation}
\end{align}

The above equation reproduces exactly to the boundary of the set $\mathcal{B}_4|_{\mathsf{circ}}$, as we wanted to show.

\section{Proof of Theorems~\ref{theorem: 3 overlaps cannot witness} and~\ref{theorem: overlaps can witness}}

We start showing that, for every overlap triple $\pmb\Delta \in \mathcal{Q}_3$ realizable by some tuple $\Psi \in \mathcal{P}(\mathcal{H})^3$ with respect to some space $\mathcal{H}$, there exists another overlap triple ${\Psi^{\mathbb{R}}} \in \mathcal{R}(\mathcal{H}',\mathbb{A})$ for some $\mathcal{H}'$ and some basis $\mathbb{A}$ such that
\begin{equation*}
    \pmb\Delta_\Psi  = \left(\begin{matrix}
        |\langle \psi_1|\psi_2 \rangle |^2\\ |\langle \psi_1|\psi_3 \rangle |^2\\ |\langle \psi_2|\psi_3\rangle |^2
    \end{matrix}\right) = \left(\begin{matrix}
        |\langle \psi_1^{\mathbb{R}}|\psi_2^{\mathbb{R}}|^2\\
        |\langle \psi_1^{\mathbb{R}}|\psi_3^{\mathbb{R}} |^2\\ |\langle \psi_2^{\mathbb{R}}|\psi_3^{\mathbb{R}} |^2
    \end{matrix}\right)= \pmb\Delta_{\Psi^{\mathbb{R}}},
\end{equation*}
meaning that $\mathcal{R}_3 = \mathcal{Q}_3$.

\begin{proof}[Proof of theorem~\ref{theorem: 3 overlaps cannot witness}]
    To start, we assume that all overlaps are non-zero. In this case, every such tuple is quantum realizable iff the associated matrix $H(\pmb\Delta)$ from Eq.~\eqref{eq: Bargmann matrix} is positive-semidefinite, as we have seen in the proof of theorem~\ref{theorem: gota}. For this to be satisfied it is necessary and sufficient that Eqs.~\eqref{eq: constraint 1} and~\eqref{eq: constraint 2} be satisfied. Because Eq.~\eqref{eq: constraint 1} is trivially satisfied by any quantum realizable overlap, we are left with Eq.~\eqref{eq: constraint 2}. If we write this as,
    \begin{align}
        \det(H(\pmb\Delta,\phi)) =  1-\Delta_{12}-\Delta_{13}-\Delta_{23}\nonumber\\+2\sqrt{\Delta_{12}\Delta_{13}\Delta_{23}}\cos(\phi) \geq 0 \label{eq: triples overlap proof determinant}
    \end{align}
    this inequality corresponds to the bounds that must be satisfied by any quantum realization $\pmb\Delta \in \mathcal{Q}_3$. Note that imaginarity of any triple is completely captured by the phase of the third-order invariant $\Delta_{123}$, which is the imaginarity involved in this bound. If we now assume that $\phi \in \{0,\pi\}$ we get the other two bounds,
    \begin{equation}\label{eq: zero bound triple case}
        \det(H(\pmb\Delta,0)) = 1-\Delta_{12}-\Delta_{13}-\Delta_{23}+2\sqrt{\Delta_{12}\Delta_{13}\Delta_{23}} \geq 0,
    \end{equation}
    \begin{equation}\label{eq: pi bound triple case}
        \det(H(\pmb\Delta,0)) = 1-\Delta_{12}-\Delta_{13}-\Delta_{23}-2\sqrt{\Delta_{12}\Delta_{13}\Delta_{23}} \geq 0
    \end{equation}
    We now note that Ineq.~\eqref{eq: zero bound triple case} is satisfied by all possible triples of overlaps, as was shown in Ref.~\cite{galvaobrod2020quantum}. In particular, it is clear that if Eq.~\eqref{eq: triples overlap proof determinant} is satisfied Eq.~\eqref{eq: zero bound triple case} must also be satisfied. But this implies that if $H(\pmb\Delta,\phi)$ is quantum realizable then $H(\pmb\Delta,0)$ is \emph{also} quantum realizable. However, any quantum realization of $H(\pmb\Delta,0)$ is \emph{real} with respect to some Hilbert space and some basis. In particular, the Cholesky decomposition of $H(\pmb\Delta,0)$ will be one such real representation.

    To conclude the proof, we analyse the cases where some overlap is equal to zero. Suppose that $\Delta_{12} = 0$ without loss of generality. This implies that for any quantum realization $\Psi$, it is always possible to take $\vert \psi_1\rangle = \vert 0\rangle, \vert \psi_2 \rangle = \vert 1\rangle$. Because of that, both $\Delta_{13}$ and $\Delta_{23}$ will have some imaginarity-free quantum realization.
\end{proof}

Surprisingly, this is a fact that is true for triples of quantum states, but not for tuples of 4 states or more.

\begin{proof}[Proof of theorem~\ref{theorem: overlaps can witness}]
    We give a counter-example. Let us consider the matrix 
    \begin{widetext}
    \begin{equation}
        H(\pmb\Delta,\phi_{123},\phi_{124},\phi_{134}) = \left(\begin{matrix}
            1 & \sqrt{\Delta_{12}} & \sqrt{\Delta_{13}} & \sqrt{\Delta_{14}}\\
            \sqrt{\Delta_{12}} & 1 & \sqrt{\Delta_{23}}e^{i\phi_{123}} & \sqrt{\Delta_{24}}e^{i\phi_{124}}\\
            \sqrt{\Delta_{13}} & \sqrt{\Delta_{23}}e^{-i\phi_{123}} & 1 & \sqrt{\Delta_{34}}e^{i\phi_{134}}\\
            \sqrt{\Delta_{14}} & \sqrt{\Delta_{24}}e^{-i\phi_{124}} & \sqrt{\Delta_{34}}e^{-i\phi_{134}} & 1
        \end{matrix}\right).
    \end{equation}
    \end{widetext}
    There will be some quantum realization for this matrix, if and only if, it is positive-semidefinite. Because the imaginarity is completely determined by the values $\phi_{123},\phi_{124},\phi_{134}$ we let these to be only elements of the set $\{0,\pi\}$. As in the proof of theorem~\ref{theorem: 3 overlaps cannot witness}, if for any quantum realizable Gram matrix, at least one of all possible 8 real-only matrices $\{H(\pmb\Delta,\kappa_{123},\kappa_{124},\kappa_{134})\}_{\kappa_{1ij} \in \{0,\pi\}}$ is quantum realizable, we would have a similar result as the one from theorem~\ref{theorem: 3 overlaps cannot witness}. A counter example (many exist!) is the tuple $\underline\psi \in \mathcal{P}(\mathbb{C}^2)^4$:
    \begin{align}
        \vert \psi_1\rangle &= \vert 0\rangle \\
        \vert \psi_2 \rangle &= \vert +\rangle \\
        \vert \psi_3\rangle &= \vert -i \rangle \\
        \vert \psi_4\rangle &= \cos(\pi/6)\vert 0\rangle + e^{i\pi/4}\sin(\pi/6)\vert 1\rangle 
    \end{align}
    We have in this case, 
    \begin{align*}
    &\Delta_{12}(\underline\psi) = \frac{1}{2}, \Delta_{13}(\underline\psi) = \frac{1}{2}, \Delta_{14}(\underline\psi) = \frac{3}{4}\\
    &\Delta_{23}(\underline\psi) = \frac{1}{2}, \Delta_{24}(\underline\psi) = \frac{4+\sqrt{6}}{8}, \Delta_{34}(\underline\psi) = \frac{4-\sqrt{6}}{8}
\end{align*}
If we now look at the eigenvalues of all the matrices $\{H(\pmb\Delta,\kappa_{123},\kappa_{124},\kappa_{134})\}_{\kappa_{1ij} \in \{0,\pi\}}$ we see that the smallest eigenvalue is always \emph{negative},
\begin{align}
    \lambda_{min}(G(0,0,0)) &= -0.044984,\label{eq: eigenvalue 1}\\
    \lambda_{min}(G(\pi,0,0)) &= -0.512315,\\
    \lambda_{min}(G(0,\pi,0)) &= -0.709002,\\
    \lambda_{min}(G(0,0,\pi)) &= -0.561292,\\
    \lambda_{min}(G(\pi,\pi,0)) &= -0.837603,\\
    \lambda_{min}(G(0,\pi,\pi)) &= -0.704281,\\
    \lambda_{min}(G(\pi,0,\pi)) &= -0.491359,\\
    \lambda_{min}(G(\pi,\pi,\pi)) &= -1.17472. \label{eq: eigenvalue 8}
\end{align}
implying that all these matrices are \emph{not} positive-semidefinite. In other words, there can be no quantum realization with real-values only for these six overlaps $\pmb\Delta = (\Delta_{12},\Delta_{13},\Delta_{14},\Delta_{23},\Delta_{24},\Delta_{34})$ because otherwise, one of these matrices would need to be quantum realizable.

To see that this result is independent of gauge choice, leading to the particular form of $H(\pmb\Delta,\phi_{123},\phi_{124},\phi_{134})$ we can look at all possible matrices
\begin{widetext}
    \begin{equation}
        H(\pmb\Delta,\pmb\kappa) = \left(\begin{matrix}
            1 & \sqrt{\Delta_{12}}e^{i\kappa_1} & \sqrt{\Delta_{13}}e^{i\kappa_2} & \sqrt{\Delta_{14}}e^{i\kappa_3}\\
            \sqrt{\Delta_{12}}e^{-i\kappa_1} & 1 & \sqrt{\Delta_{23}}e^{i\kappa_4} & \sqrt{\Delta_{24}}e^{i\kappa_5}\\
            \sqrt{\Delta_{13}}e^{-i\kappa_2} & \sqrt{\Delta_{23}}e^{-i\kappa_4} & 1 & \sqrt{\Delta_{34}}e^{i\kappa_6}\\
            \sqrt{\Delta_{14}}e^{-i\kappa_3} & \sqrt{\Delta_{24}}e^{-i\kappa_5} & \sqrt{\Delta_{34}}e^{-i\kappa_6} & 1
        \end{matrix}\right).
    \end{equation}
    \end{widetext}
If one finds all eigenvalues for all $2^6$ possible choices of $\pmb\kappa$ we see that all matrices return the smallest eigenvalues that are, still, those given by Eq.~\eqref{eq: eigenvalue 1}-\eqref{eq: eigenvalue 8}. Therefore, all $2^6$ matrices have negative eigenvalues, and independent of gauge choice, $H(\pmb\Delta,\pmb\kappa)$ cannot have a quantum realization.
\end{proof}

We have numerically investigated how often it is possible to witness imaginarity using the procedure above. We sample $10^6$ Haar random tuples of four states of different dimensions, and check the fraction of tuples that are successfully witnessed. For two-dimensional quantum state tuples, $608\, 329$ tuples were witnessed using the methodology above. For three-dimensional quantum states, $59\,803$ and for four-dimensional quantum states only $5 \, 914$. This suggests that as the dimension increases, a smaller fraction of uniformly drawn Haar random tuples have their imaginarity signalled by the witnesses just discussed.

\end{document}